\newcommand{\ot}{\leftarrow}
\newcommand{\argmax}{\mathop{\rm argmax}}
\newcommand{\argmin}{\mathop{\rm argmin}}
\newcommand{\Ch}{\mathop{\mathrm{Ch}}\nolimits}
\newcommand{\smax}{\overline{s}}
\renewcommand{\mid}{\,:\,}
\newtheorem{theorem}{Theorem}
\newtheorem{lemma}{Lemma}
\newtheorem{proposition}{Proposition}
\newtheorem{definition}{Definition}
\newtheorem{example}{Example}
\begin{document}
% The file aaai.sty is the style file for AAAI Press 
% proceedings, working notes, and technical reports.
%
\title{Approximately Stable Matchings with Budget Constraints\thanks{
    A full version % of the paper
    can be found at http://arxiv.org/abs/1711.07359. 
    Supported in part by JST ACT-I and KAKENHI 26280081, 16K16005, and 17H01787.}}
\author{{\bf Yasushi Kawase}\\
  {\normalsize Tokyo Institute of Technology and} \\
%   {\normalsize RIKEN Center for Advanced Intelligence Project, Tokyo, Japan.}\\
  {\normalsize RIKEN AIP Center, Tokyo, Japan.}\\
  {\normalsize \texttt{kawase.y.ab@m.titech.ac.jp}}
  \And 
  {\bf Atsushi Iwasaki}\\
  {\normalsize University of Electro-Communications and} \\
%  {\normalsize RIKEN Center for Advanced Intelligence Project, Tokyo, Japan.}\\
  {\normalsize RIKEN AIP Center, Tokyo, Japan.}\\
  {\normalsize \texttt{iwasaki@is.uec.ac.jp}}
}
%% \author{Paper \#4458}
\maketitle
\begin{abstract}
  This paper examines two-sided matching with budget constraints 
  where one side (a firm or hospital) can make monetary transfers (offer wages) to the other (a worker or doctor). 
  In a standard model, while multiple doctors can be matched to a single hospital, 
  a hospital has a {\em maximum quota}; thus, the number of doctors assigned to a hospital cannot exceed a certain limit. 
  In our model, in contrast, a hospital has a {\em fixed budget}; that is, 
  the total amount of wages allocated by each hospital to doctors is constrained. 
  With budget constraints, stable matchings may fail to exist and checking for the existence is hard.
  To deal with the nonexistence, % of stable matchings, 
  we extend the ``matching with contracts'' model of Hatfield and Milgrom
  so that it deals with \textit{approximately stable} matchings where 
  each of the hospitals' utilities after deviation can increase by a factor up to a certain amount. 
  We then propose two novel mechanisms that efficiently return a stable matching that exactly satisfies the budget constraints. 
  Specifically, by sacrificing strategy-proofness, our first mechanism achieves the best possible bound.
  We also explore a special case on which a simple mechanism is strategy-proof for doctors,
  while maintaining the best possible bound of the general case. 
\end{abstract}

% This paper examines two-sided matching with budget constraints where one side (a firm or hospital) can make monetary transfers (offer wages) to the other (a worker or doctor). In a standard model, while multiple doctors can be matched to a single hospital, a hospital has a {\em maximum quota}; thus, the number of doctors assigned to a hospital cannot exceed a certain limit. In our model, in contrast, a hospital has a {\em fixed budget}; that is, the total amount of wages allocated by each hospital to doctors is constrained. With budget constraints, stable matchings may fail to exist and checking for the existence is hard. To deal with the nonexistence of stable matchings, we extend the ``matching with contracts'' model of Hatfield and Milgrom so that it deals with \textit{approximately stable} matchings where each of the hospitals' utilities after deviation can increase by a factor up to a certain amount. We then propose two novel mechanisms that efficiently return a stable matching that exactly satisfies the budget constraints. Specifically, by sacrificing strategy-proofness, our first mechanism achieves the best possible bound. We also explore a special case on which a simple mechanism is strategy-proof for doctors, while maintaining the best possible bound of the general case. 

\section{Introduction}

This paper studies a two-sided, one-to-many matching model
when there are budget constraints on one side (a firm or hospital), that is, 
the total amount of wages that it can pay to the other side (a worker or doctor) is limited.
The theory of two-sided matching has been extensively developed, as illustrated by the comprehensive surveys of
% Roth and Sotomayor \cite{Roth:CUP:1990}
\citeauthor{Roth:CUP:1990}~\shortcite{Roth:CUP:1990} or
% Manlove \cite{manlove:2013}
\citeauthor{manlove:2013}~\shortcite{manlove:2013}.
Rather than fixed budgets, maximum quotas limiting the total number of doctors that each hospital can hire are typically used. 
 
Some real-world examples are subject to matching with such budget constraints: 
% A college can offer stipends to students to recruit better students and the budget for admission is limited, 
% a firm can offer wages to workers, and the employment cost depends on earning in the previous accounting period, 
% a public hospital can offer salaries to doctors and the total amount relies on a fund from the government, and so on. 
a college can offer stipends to recruit better students when the budget for admission is limited; 
a firm can offer wages to workers under the condition that employment costs depend on earnings in the previous accounting period; 
a public hospital can offer salaries to doctors when the total amount relies on funds from the government; and so on. 
% To fix ideas,
% To clearly establish what is going on in our model employed here,
To establish our model and concepts, 
we use doctor-hospital matching as a running example. 
% In this paper, we use the terms ``doctors'' and ``hospitals'' to represent agents in our model,
% but our results can obviously be applied to a variety of matching problems with transfers. 

To date, most papers on matching with monetary transfers assume that budgets are unrestricted,
e.g., \citeauthor{kelso:ecma:1982}~\shortcite{kelso:ecma:1982}. 
When they are restricted, stable matchings may fail to exist~\cite{mongell:el:1986,abizada:te:2016}. 
% In particular, \citeauthor{abizada:te:2016}~\shortcite{abizada:te:2016} considers a subtly different model from ours
% and shows that (coalitional) stable matchings, 
% where groups of doctors and hospitals have no profitable deviations, may not exist.
%
There are several other possibilities to circumvent the nonexistence problem. % in matchings with budget constraints. 
\citeauthor{abizada:te:2016}~\shortcite{abizada:te:2016} modifies the notion of stability in a different way 
and proposes a variant of the Deferred Acceptance (DA) mechanism 
that produces a {\em pairwise}, instead of coalitional, stable matching and is strategy-proof for doctors. 
\citeauthor{dean:ifip:2006}~\shortcite{dean:ifip:2006} assume that hospitals' priorities
are lexicographic to ensure the existence of a stable matching.
We instead allow each hospital to have an \textit{additive} utility, but the existence is not guaranteed yet. 
% Alternatively, 
\citeauthor{kawase:ijcai:2017}~\shortcite{kawase:ijcai:2017} focus on
\textit{near-feasible} matchings that exceed each budget of the hospitals by a certain amount.
Their mechanisms find a ``nearby'' instance with a stable matching for each instance of a matching problem.

This paper focuses on \textit{approximately stable} matchings where
the participants are willing to change the assignments only for a multiplicative improvement of a certain amount~\cite{arkin2009}.
This idea can be interpreted as one in which
a hospital in a blocking pair changes his match as soon as his utility after the change increases by any (arbitrarily small) amount.
\citeauthor{arkin2009}~\shortcite{arkin2009} examine a stable roommate problem, which is a non-bipartite one-to-one matching problem, 
while we examine a bipartite one-to-many matching problem.
It is reasonable for a hospital to change his assignment only in favor of a significant improvement 
(the grass may be greener on the other side, but it takes effort to cross the fence).
  
Furthermore, it must be emphasized that these studies, except the works of 
\citeauthor{abizada:te:2016}~\shortcite{abizada:te:2016} and \citeauthor{kawase:ijcai:2017}~\shortcite{kawase:ijcai:2017}, 
discuss no strategic issue, that is, misreporting a doctor's preference may be profitable. 
The literature on matching has found strategy-proofness for doctors, i.e., 
no doctor has an incentive to misreport her preference, 
to be a key property in a wide variety of settings~\cite{Abdulkadiroglu:AER:2003}.

The contribution of this paper is twofold: 
First, we modify the generalized DA algorithm and devise a new property, 
which we call \textit{$\alpha$-approximation}, on the \textit{matching with contract} framework~\cite{Hatfield:AER:2005}.
This is because the existing class of mechanisms and properties are not sufficient
to characterize a choice function that produces such an approximately stable matching.
Second, we propose two novel mechanisms that efficiently return a stable matching that exactly satisfies the budget constraints. 
Specifically, by sacrificing strategy-proofness, the best possible bound is achieved. 
%% Furthermore, we find a special case such that a simple mechanism is strategy-proof for doctors, keeping the best possible bound. 
% In particular, how many contracts the choice functions of these mechanisms choose
% is crucial for satisfying the properties such as substitutability,
% at each round of the generalized DA process.  
We further examine a special case in which each hospital has a utility that is proportional to the size of each contract. 
We establish two mechanisms that achieve better approximation ratios and, 
Notably, by sacrificing strategy-proofness, the ratio is bounded by a constant. 
Table~\ref{tab:summary} summarizes the results on approximation ratios.

\begin{table}[tb]
  \caption{Summary of the Results: UB and LB stand for upper and lower bounds, respectively.
    Let $\smax$ be the maximum fraction of a wage to budget among the given contracts and $|D|$ be the number of doctors. 
  }\label{tab:summary}
  \centering
  \subcaption{Non-Strategy-Proof Mechanisms: Thm.~\ref{thm:knapsack_lower} holds if $\smax>1/2$.}
  \label{table:results_nonsp}
  \scalebox{0.92}{%
    {\renewcommand\arraystretch{1.2}
      \setlength{\tabcolsep}{4pt}
      \begin{tabular}[htbp]{|l|l|}\hline
        \textbf{Hosp. Utils.}                                  & \textbf{Approximation Ratio}                     \\\hline
        Additive                           & UB: $\frac{1}{1-\smax}$ (Alg.\ \ref{alg:knapsack} and Thm.~\ref{thm:knapsack})\\
        (incl. Proportional)                                                         & LB: $\frac{1}{1-\smax}$ (Thm.~\ref{thm:knapsack_lower})\\\hline 
        \multirow{2}{*}{Proportional}& UB: $1.62$ (Alg.\ \ref{alg:prop_knapsack} and Thm.~\ref{thm:prop_knapsack})\\
                                                                                     & LB: $1.62$ (Thm.~\ref{thm:prop_nonexist})\\\hline
      \end{tabular}}}

  \smallskip
  
  \subcaption{Strategy-Proof Mechanisms.}
  \label{table:results_sp}
\scalebox{0.92}{%
{\renewcommand\arraystretch{1.2}
\setlength{\tabcolsep}{4pt}
  \begin{tabular}[htbp]{|l|l|l|}\hline
    \textbf{Hosp. Utils.}        & \textbf{Approximation Ratio}                     \\\hline
    Additive                     & UB: $\lceil\frac{1+\ln(|D|-1)}{1-\smax}\rceil$ (Alg.\ \ref{alg:knapsack_sp} and Thm.~\ref{thm:knapsack_sp})\\
    (incl. Proportional)                                                          & LB: open\\\hline
    \multirow{2}{*}{Proportional}& UB: $\frac{1}{1-\smax}$                        (Alg.\ \ref{alg:prop_knapsack_sp} and Thm.~\ref{thm:prop_knapsack_sp})\\
                                                                                   & LB: open\\\hline
  \end{tabular}}}
\end{table}

% Also, matchings with budget constraints relates to matchings with couples 
% where joint preference lists over pairs of hospitals are submitted. 
% (e.g., \cite{%mcdermid2010,
%   kojima:qje:2013,perrault:aamas:2016})
% If one assumes inseparable couples, in which two doctors of each couple must be assigned to the same hospital, 
% this problem is contained in our setting as a special case.
% \citeauthor{mcdermid2010}\shortcite{mcdermid2010} prove that deciding whether a stable matching exists or not is NP-hard.

% \section{Related Literature}
% \begin{itemize}
% \item Intractability resultの限定性を議論
% \item 他に査読者に突っ込まれそうなことを議論
% \end{itemize}

Let us finally note that there is a certain amount of recent studies
on two-sided matchings in the AI % and multi-agent systems
community, although this literature has been established mainly in the field across algorithms and economics. 
\citeauthor{drummond:ijcai:2013}~\shortcite{drummond:ijcai:2013,drummond:aaai:2014}
% Drummond and Boutilier~\cite{drummond:ijcai:2013,drummond:aaai:2014} 
examine preference
elicitation procedures for two-sided matching.
In the context of \textit{mechanism design},
\citeauthor{hosseini:aaai:2015}~\shortcite{hosseini:aaai:2015}
% Hosseini, Larson, and Cohen~\cite{hosseini:aaai:2015} 
consider a mechanism for a situation where
agents' preferences dynamically change.
\citeauthor{kurata:jair:2017}~\shortcite{kurata:jair:2017}
% Kurata et al.~\cite{kurata:jair:2017} 
deal with strategy-proof mechanisms
for affirmative action in school choice programs (diversity constraints), while
\citeauthor{Goto:aij:2016}~\shortcite{Goto:aij:2016}
% Goto et al.~\cite{Goto:aij:2016} 
handle regional constraints, e.g.,
regional minimum/maximum quotas are imposed on hospitals in urban areas
so that more doctors are allocated to rural areas.

\section{Model}
% \item Matching with Contract model
This section describes a model for two-sided matchings with budget constraints. 
A market is a tuple $(D,H,X,\succ_D,u,s)$ where each component is defined as follows: 
There is a finite set of doctors $D=\{d_1,\ldots,d_n\}$ and a finite set of hospitals $H=\{h_1,\ldots,h_m\}$. 
Let $X$ denote a finite set of contracts where each contract $x\in X$ is associated with a doctor $x_D\in D$ and a hospital $x_H\in H$.
Let $s: X\to\mathbb{R}_{+}$ be the size function, where $\mathbb{R}_{+}$ is the set of nonnegative real numbers. 
%%% AI: s_hとhに紐付けないのはどうして？
%%% AI: 紐付けなくても，契約の集合がhに紐付いていればよいと考える．．．のか？
%%% AI: あとで各病院がfixed budgetをもっていて，あるオファー（契約）が予算に占める割合をsとすると説明つけるか？
%%% YK: s(x)は「契約xでの給与額がx_Hの予算のうちどれだけを占めるか」なのでhとは自動的に紐付けられます．IJCAI論文だと s(x)=x_W/B_{x_H}です．
A contract $x$ means that hospital $x_H$ offers $s(x)$ fraction of its budget as a wage to doctor $x_D$. 
% A hospital can choose a wage freely and can offer a doctor multiple contracts with different wages. 
For any subset of contracts $X'\subseteq X$, 
let $X'_d$ denote $\{x\in X' \mid x_D=d\}$ and $X'_h$ denote $\{x\in X' \mid x_H=h\}$. 
For notational simplicity, let $s(X')=\sum_{x\in X'}s(x)$ for $X'\subseteq X$.

Let $\succ_D={(\succ_d)}_{d\in D}$ denote the doctors' preference profile 
where $\succ_d$ is the strict relation of $d\in D$ over $X_d\cup\{\emptyset\}$; that is, 
$x\succ_d x'$ means that $d$ strictly prefers $x$ to $x'$. $\emptyset$ indicates a \textit{null} contract.
% 
%%% AI: u_hとhに紐付けないのはどうして？
%%% YK: 自動的にu(x)はx_Hに紐付けられているからです．$u_h: X_h\to\mathbb{R}_+$するのとどっちがよいでしょうか？
Let $u: X\to\mathbb{R}_+$ be the utility function.
We assume that the utility of each hospital is \emph{additive}; that is, 
for any two sets of contracts $X', X''\subseteq X_h$, hospital $h$ prefers $X'$ to $X''$ if and only if $\sum_{x\in X'}u(x)>\sum_{x\in X''}u(x)$ holds. 
In what follows, we denote $\sum_{x\in X_h'}u(x)$ by $u(X_h')$ for simplicity.
%% AI: 後で定義する安定性は先行研究から来ているからこれは言わなくてよい？
%% We assume, instead of priority orderings, 
%% cardinal utilities as used in some previous work~\cite{barbera::2004,bouveret:ijcai:2011,budish:aer:2012}. 
%% YK: cardinality utilityであることをどれくらい注意しないといけないかはよくわからないので，お任せします．

% \item Approximate Stability
We call a subset of contracts $X'\subseteq X$ a \emph{matching} if
$|X'_d| \le 1$ for all $d\in D$ and $s(X'_h)\le 1$ for all $h\in H$.
Given a matching $X'$,
another matching $X''\subseteq X_h$ for a hospital $h$
is an \emph{$\alpha$-blocking coalition} if
\begin{enumerate}
\item $x\succ_{x_D} x'$ for any $x\in X''\setminus X'$ and $x'\in X_{x_D}'$, and
\item $u(X'')>\alpha\cdot u(X_h')$. % for $\alpha\ge 1$
\end{enumerate}
We then obtain a stability concept. 
\begin{definition}[$\alpha$-stability~\cite{arkin2009}]\label{def:stability}
  We say a matching $X'$ is \emph{$\alpha$-stable} if
  there exists no $\alpha$-blocking coalition. 
\end{definition}
As we can see, $1$-stability is equivalent to the standard stability concept.
Intuitively, the hospitals are willing to change the assignments only for a multiplicative improvement of $\alpha$. 
This idea regards the value of $\alpha$ as a switching cost for the hospitals.
Note that we can also define the second condition % of $\alpha$-blocking coalition
in an additive manner, i.e., $u(X'')>u(X_h')+\alpha$. 
%% AI: scale invarianceを満たさないと後で何が困るのか？もうちょっと具体的に．
%% YK: 単位系を変えて効用をすべて100倍にすると，\alphaも100倍にしなくてはならないので，近似度を抑えることができないというのが問題です．
However, since this does not satisfy scale invariance, the value of $\alpha$ is not well bounded.
In fact, when a market has no $\alpha$-stable matching in the additive sense, 
a market with the hospitals' utilities that are multiplied by 100 has no $100\alpha$-stable matching.
%The market may have an $\alpha$-stable matching, though it has no $100\alpha$-stable one. 
%YK: additiveに定義した場合のalpha-stableについての話だと分かるようにしたい

A mechanism is a function that takes a profile of 
doctors' preferences as input and returns a matching $X'$. 
We say a mechanism is $\alpha$-stable if
it always produces an $\alpha$-stable matching for certain $\alpha$. 
We also say a mechanism is \emph{strategy-proof} for doctors if
no doctor ever has any incentive to misreport her preference,
regardless of what the other doctors report.

To design and analyze a mechanism, 
we modify the matching with contract framework~\cite{Hatfield:AER:2005} for our approximate stability concept. 
It uses choice functions $\Ch_D: 2^X \rightarrow 2^X$ and $\Ch_H: 2^X \rightarrow 2^X$. 
For each doctor $d$, its choice $\Ch_d(X_d')$ is $\{x\}$ such that $x$ is the most preferred contract within $X'_d$. 
We assume $\Ch_d(X'_d)=\emptyset$ if $\emptyset\succ_d x$ for all $x\in X'_d$. 
Then, the choice function of all doctors $\Ch_D(X')$ is given as $\bigcup_{d \in D} \Ch_d(X_d')$.

Similarly, the choice of all hospitals $\Ch_H(X')$ is
$\bigcup_{h \in H} \Ch_h(X_h')$, where $\Ch_h(X_h')$ is a subset of $X'_h$. 
There are alternative ways to define the choice function of each hospital $\Ch_h$. 
As we discuss later, the mechanisms considered in this paper can be expressed
by the generalized DA with different formulations of $\Ch_H$.

% \item Sequential Choice Function
%   \begin{itemize}
%   \item SC1, SC2, $\alpha$-aproximate, order-invariant, size-monotone
%   \item 見せる順番は検討したい．
%   \end{itemize}
In particular, we construct it from a class of \textit{sequential choice functions} that receives a sequence of contracts.
%% AI: 後でnotationは修正
\begin{definition}
Let $X_h=\{x_1,\dots,x_n\}$ and $\mathcal{S}_n$ be the symmetric group of degree $n$.
A \emph{sequential choice function} on $X_h$ is a function $\Ch_h$ over a sequence of distinct contracts in $X_h$ such that
\begin{align}\label{eq:SC1}
\Ch_h(x_{\sigma(1)},\dots, x_{\sigma(k)}) & \subseteq\{x_{\sigma(1)},\dots, x_{\sigma(k)}\}\ \\ \notag
                                          & \text{for all } \sigma\in\mathcal{S}_n \text{ and } 0\le k\le n, 
\end{align}
and
\begin{align}\label{eq:SC2}
\begin{array}{l}
\{x_{\sigma(1)},\dots, x_{\sigma(k)}\}  \setminus\Ch_h(x_{\sigma(1)},\dots, x_{\sigma(k)}) \\ 
\subseteq \{x_{\sigma(1)},\dots, x_{\sigma(k+1)}\} \setminus\Ch_h(x_{\sigma(1)},\dots, x_{\sigma(k+1)})
\end{array}\\
\text{for all }\sigma\in\mathcal{S}_n \text{ and } 0\le k< n. \notag
\end{align}
%   \begin{description}
%   \item[(SC1)] $\Ch(e_{\sigma(1)},\dots, e_{\sigma(k)})\subseteq\{e_{\sigma(1)},\dots, e_{\sigma(k)}\}$ for all $\sigma\in\mathcal{S}_n$ and $0\le k\le n$,
%   \item[(SC2)] $\{e_{\sigma(1)},\dots, e_{\sigma(k)}\}\setminus\Ch(e_{\sigma(1)},\dots, e_{\sigma(k)})\subseteq \{e_{\sigma(1)},\dots, e_{\sigma(k+1)}\}\setminus\Ch(e_{\sigma(1)},\dots, e_{\sigma(k+1)})$ for all $\sigma\in\mathcal{S}_n$ and $0\le k< n$.
% %  \item[(SC2')] $\Ch(e_{\sigma(1)},\dots, e_{\sigma(i)})\subseteq\Ch(e_{\sigma(1)},\dots, e_{\sigma(i-1)})\cup\{e_{\sigma(i)}\}$ for all $\sigma\in\mathcal{S}_n$ and $0< i\le n$.
%   \end{description}
\end{definition}
Here \eqref{eq:SC1} signifies that the choice must be taken from available contracts 
and \eqref{eq:SC2} signifies that, if a contract $x$ is rejected at some point, then it is also rejected ever after. 
% Note that Eq.~\ref{eq:SC2} is equivalent to the following condition under Eq.~\ref{eq:SC1}:
% \begin{align*}
% \Ch(e_{\sigma(1)},\dots, e_{\sigma(k+1)})\subseteq\Ch(e_{\sigma(1)},\dots, e_{\sigma(k)})\cup\{e_{\sigma(k+1)}\}\ \mathrm{for\ all}\ \sigma\in\mathcal{S}_n\ \mathrm{and}\ 0\le k< n.
    %   \end{align*}

% \begin{itemize}
% \item 通常モデルとSCFの対応
%   \begin{itemize}
%   \item SC2とsize-monotoneは通常の選択関数におけるSUBとLADに相当
%   \item SCFはGDAと組み合わせて１つのメカニズムを構成
%   \item  通常のマッチングの文脈において，SCFがSUB（とIRC）を満たせば，stable mechanismを構成
%   \item  さらにLADを満たせば，strategy-proof mechanismを構成
%   \item 通常のGDAとHM-stabilityには定義を示さずにさらりと説明したい
%   \end{itemize}
% \end{itemize}

This framework characterizes a class of mechanisms called the \emph{generalized DA mechanism}. 
If a mechanism---specifically, the choice function of every hospital---satisfies 
three properties, \textit{substitutability}, \textit{irrelevance of rejected contracts},
and \textit{law of aggregate demand}, then it always finds a ``stable'' allocation and is strategy-proof for doctors~\cite{Hatfield:AER:2005}.
This result is preserved because a sequential choice function can induce a (standard) choice function if it is \textit{order-invariant}. 
\begin{definition}[Order-invariance] 
  A sequential choice function $\Ch_h$ on $X_h=\{x_1,\dots,x_n\}$ is % said to be
  \emph{order-invariant} if
  \begin{align*}
    \Ch_h(x_{\sigma(1)},\dots, x_{\sigma(k)})=\Ch_h(x_{\tau(1)},\dots, x_{\tau(k)})
  \end{align*}
  holds for any $\sigma,\tau\in\mathcal{S}_n$ and $0\le k\le n$
  such that $\{x_{\sigma(1)},\dots, x_{\sigma(k)}\}=\{x_{\tau(1)},\dots, x_{\tau(k)}\}$.
\end{definition}
Thus, % from the definition,
if a sequential choice function is order-invariant,
the induced choice function satisfies substitutability.
The induced choice function satisfies law of aggregate demand if it further satisfies size-monotonicity.%
\footnote{If a choice function satisfies substitutability
  and law of aggregate demand simultaneously, it also satisfies irrelevance of rejected contracts~\cite{aizerman1981gto}.} 
\begin{definition}[Size-monotonicity]
  A sequential choice function $\Ch_h$ on $X_h=\{x_1,\dots,x_n\}$ is % said to be
  \emph{size-monotone} if
  \begin{align*}
    |\Ch_h(x_{\sigma(1)},\dots, x_{\sigma(i)})|\le |\Ch_h(x_{\sigma(1)},\dots, x_{\sigma(j)})|
  \end{align*}
  holds for any $\sigma\in\mathcal{S}_n$ and $0\le i\le j\le n$.
\end{definition}
 
Here, strategy-proofness is still characterized by order-invariance and size-monotonicity
%% \COMM{This sounds correct?}
%% because approximating stability does not affect 
because hospital choice does not affect doctor preferences. 
However, even if a sequential choice function for each hospital is order-invariant, 
the mechanism employed with the function does not always find a stable matching 
because such a matching may not exist in the presence of budget constraints.

Finally, for Lemma~\ref{lem:matroid-approximation}, let us briefly explain a \emph{matroid}, 
which is a set system $(X,\mathcal{F})$ with the following properties: 
(i) $\emptyset\in\mathcal{F}$;
(ii) $F\subseteq G\in\mathcal{F}$ implies $F\in\mathcal{F}$; and 
(iii) $F,G\in\mathcal{F}$, $|F|<|G|$ implies the existence of $x\in G\setminus F$ such that $F\cup\{x\}\in\mathcal{F}$.
Notice that a subset \(F\) of \(X\) is called an \emph{independent set} if \(F\) belongs to \(\mathcal{F}\).

\section{Negative Results}
\label{sec:negative results}

% \item Impossibility result: まあ省略
% When no hospital is allowed to violate the given constraints, 
% it is known that stable matchings may not exist~\cite{mongell:el:1986,mcdermid2010,abizada:te:2016}.
The nonexistence of stable matchings raises the issue of the complexity of deciding the existence of a ($1$-)stable matching. 
% McDermid and Manlove~\cite{mcdermid2010} 
\citeauthor{mcdermid2010}~\shortcite{mcdermid2010}
considered a special case of our model and proved NP-hardness. 
\citeauthor{HISY2017}~\shortcite{HISY2017}
% Hamada \textit{et al.}~\cite{HISY2017}
examined a similar model to ours
% and % Abizada~\cite{abizada:te:2016} 
% \citeauthor{abizada:te:2016}~\shortcite{abizada:te:2016} 
and proved that the existence problem is $\Sigma_2^P$-complete. 
%  
% \item Hardness result
% To deal with the nonexistence of stable matchings, 
% we focus on approximately stable matchings 
% where each of the hospitals' utilities after deviation 
%     does not increase by factor up to a certain amount.
With approximately stable matchings in Definition~\ref{def:stability}, 
we provide the following hardness result. 
\begin{theorem}\label{thm:np-hard}
  For any constant $\alpha>1$, distinguishing whether a given market
  has a $1$-stable matching or
  has no $\alpha$-stable matching is NP-complete. 
\end{theorem}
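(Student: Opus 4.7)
The plan is a reduction from an NP-hard problem (e.g., 3-SAT, or equivalently a strengthening of the existing 1-stability-existence reductions of McDermid (2010) or HISY (2017)), into which a gap-amplification device is grafted. Membership in NP is routine: a 1-stable matching is a polynomial-size certificate, and verifying that no hospital $h$ admits a blocking coalition reduces, for each $h$, to a knapsack-type maximization over those $X''\subseteq X_h$ in which every doctor appears with a contract she strictly prefers to her current one, which is solvable in polynomial time when utilities and sizes are given explicitly.

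For NP-hardness, given a 3-SAT formula $\phi$ and the constant $\alpha>1$, I would construct a market with one variable gadget per variable (two literal-doctors $d_i^T,d_i^F$ and a variable hospital whose unit budget together with contract sizes strictly greater than $1/2$ forces a binary TRUE/FALSE choice) and one clause hospital $h_C$ per clause $C=\ell_1\lor\ell_2\lor\ell_3$, whose only candidate contracts are those of the three doctors representing $C$'s literals. The crucial amplification ingredient is the utility profile on $h_C$: matching $h_C$ with a doctor whose literal is TRUE under the assignment encoded by the variable gadgets yields a large value $M=M(\alpha)$, while matching it with a FALSE-literal doctor (or leaving $h_C$ empty) yields only a small value $u_0$. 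Doctors' preferences are set so that the doctor \emph{not} selected by her variable gadget is available for $h_C$ and strictly prefers $h_C$ to being unmatched.

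Correctness in the YES direction: a satisfying assignment induces a matching in which every $h_C$ reaches utility $M$ by absorbing a doctor corresponding to some TRUE literal of $C$; one checks that this matching is 1-stable by the choice of preferences. Correctness in the NO direction: the variable gadgets of any matching encode some truth assignment that, by unsatisfiability, falsifies at least one clause $C$, so $h_C$ is matched only to FALSE-literal doctors (or nobody), giving $u(X_{h_C}')\le 3u_0$. The blocking coalition that reassigns $h_C$ to an available TRUE-literal doctor then has utility at least $M$, and choosing $M>3\alpha u_0$ — or simply $u_0=0$, which makes any positive-utility block automatically $\alpha$-blocking for every $\alpha>1$ at once — turns this into an $\alpha$-blocking coalition.

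The hardest part will be the gadget design ensuring simultaneously that (i) in the YES case no cross-gadget deviation or partial replacement within a clause hospital creates a 1-blocking coalition, and (ii) in the NO case the required TRUE-literal doctor is always available and strictly prefers $h_C$ to her present seat, uniformly over every possible NO-case matching. The combinatorial difficulty is essentially a universal-existential quantifier alternation (``for every matching, there exists a large block''), so the cleanest route is likely to set $u_0=0$ and invest in preference structure that guarantees the existence of the blocking doctor, rather than trying to calibrate the $M/u_0$ ratio arithmetically.
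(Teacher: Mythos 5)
There are two genuine gaps. First, your NP-membership argument is broken: deciding whether a given matching $X'$ admits a $1$-blocking coalition at a hospital $h$ is itself a knapsack \emph{decision} problem over $X_h$, and with sizes and utilities encoded in binary this is NP-hard, not polynomial-time solvable as you claim (``given explicitly'' does not mean ``given in unary''). The paper resolves this precisely by exploiting the promise gap: since one only has to separate ``has a $1$-stable matching'' from ``has no $\alpha$-stable matching,'' it suffices to verify the weaker condition $u(X'_h)\ge \frac{1}{\alpha}\max\{\cdots\}$ using an $\alpha$-approximate knapsack solution, which an FPTAS delivers in polynomial time for any fixed $\alpha>1$. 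Without this observation (or some restriction you have not justified), your certificate cannot be checked efficiently and membership in NP is not established.

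Second, the hardness direction is a plan rather than a proof, and the part you defer --- designing gadgets so that \emph{every} matching of a NO-instance admits an $\alpha$-blocking coalition --- is exactly the content of the theorem. Setting $u_0=0$ does not by itself discharge the universal quantifier: you must still show that in every matching (including ``inconsistent'' ones where doctors are unmatched or gadgets are half-filled) some clause hospital ends up with utility so low, and a strictly-preferring deviating doctor so reliably available, that an $\alpha$-blocking coalition exists, while simultaneously guaranteeing that no such block arises in the YES case. The paper does this concretely with a subset-sum reduction: a selector doctor $d_i^0$ chooses among $h^+$ (size $a_i/t$), $h^-$ (size $a_i/(a-t)$), and an escape contract $r^{i,3}$ of utility $1$; an amplification gadget of $m=\lceil 3\alpha^2\rceil$ small contracts of total utility $2\alpha$ together with alternating $y/z$ chains forces, in every matching of a NO-instance, either the coalition $\{x^{i^*,1},\dots,x^{i^*,m}\}$ of utility $2\alpha>\alpha\cdot 1$ against the hospital holding $r^{i^*,3}$, or the singleton coalition $\{r^{i^*,3}\}$ against a hospital whose utility is at most $2\alpha/m<1/\alpha$. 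Until you supply gadgets with this ``for all matchings'' guarantee, the reduction is not established.
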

We prove the NP-completeness by reducing from the \emph{subset sum problem}~\cite{garey1979cai}, 
though details are omitted due to space limitations. 
Furthermore, to obtain an approximately stable matching, 
the amount of approximation does not fall $\frac{1}{1-\smax}$, 
where $\smax$ indicates the largest wage size among the given contracts, i.e., $\smax=\max_{x\in X}s(x)$.

\begin{theorem}\label{thm:knapsack_lower}
For any positive reals $1/2<\smax<1$ and $\epsilon$, 
there exists a market such that
$s(x)\le\smax$ for all $x$ and
no $\left(\frac{1}{1-\smax}-\epsilon\right)$-stable matching exists.
\end{theorem}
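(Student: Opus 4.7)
The plan is to exhibit, for each $\epsilon>0$, a market in which no $\left(\tfrac{1}{1-\smax}-\epsilon\right)$-stable matching exists. Set $\alpha=\tfrac{1}{1-\smax}-\epsilon$; note $\alpha>1$ since $\smax>\tfrac12$. A preliminary observation is that the construction must employ at least two hospitals: in a single-hospital market, the utility-maximizing feasible matching $X^\ast$ cannot be $\alpha$-blocked for $\alpha\ge 1$, because any blocking coalition $X''$ satisfies $u(X'')\le u(X^\ast)<\alpha\,u(X^\ast)$, contradicting the blocking condition. So the instability must arise from competition between hospitals for the same doctors.

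I would build a two-hospital template $(h_1,h_2)$, each of budget $1$. Choose an integer $k$ with $\alpha<k$ and $k(1-\smax)\le 1$; such $k$ exists since $\alpha<\tfrac{1}{1-\smax}$ (if needed, contract sizes can be perturbed slightly below $1-\smax$ to align $k=\lfloor 1/(1-\smax)\rfloor$). The market contains: a ``big'' doctor $d_0$ with a size-$\smax$ contract at $h_1$ and a size-$(1-\smax)$ contract at $h_2$; $k$ ``small'' doctors $d_1,\dots,d_k$ each holding a size-$(1-\smax)$ contract at $h_2$; and auxiliary big contracts at $h_1$ serving as alternative slots. Utilities are calibrated so that single contracts have roughly unit utility while $k$ small contracts together yield utility $k>\alpha$, producing the desired blowup factor.

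The proof then proceeds by case analysis on feasible matchings, classified by the status of $d_0$ (matched at $h_1$, at $h_2$, or unmatched) and the subset of small contracts used at $h_2$. For each class I would exhibit an explicit $\alpha$-blocking coalition and verify both conditions of Definition~\ref{def:stability}: that $x\succ_{x_D}x'$ for every $x\in X''\setminus X'$ and $x'\in X'_{x_D}$, and that $u(X'')>\alpha\,u(X'_h)$. The ratio $\tfrac{1}{1-\smax}$ emerges because replacing a size-$\smax$, utility-$1$ contract by $k$ size-$(1-\smax)$, utility-$1$ contracts multiplies the receiving hospital's utility by $k\to\tfrac{1}{1-\smax}$.

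The hard part will be ensuring that no matching escapes being $\alpha$-blocked: at each hospital the locally utility-maximizing matching tends to be stable in isolation, so preferences must be arranged so that such a matching always loses a doctor to the other hospital's blocking coalition. The most delicate subcase is when $h_2$ is saturated with $k$ small contracts so that its internal utility is already maximal; here one relies on the big-contract doctor $d_0$, whose top choice is at $h_2$, so that a coalition at $h_1$ claiming an alternative big contract blocks by a factor exceeding $\alpha$. Tuning $k$ close enough to $\tfrac{1}{1-\smax}$, which is possible for every $\epsilon>0$, yields the promised bound.
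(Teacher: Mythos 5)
Your intuition for where the ratio $\frac{1}{1-\smax}$ comes from is broadly right, but the construction as sketched has two gaps, the first of which is fatal to the stated bound. The integer $k$ you need does not exist in general: you require $\alpha<k$ and $k(1-\smax)\le 1$, i.e.\ an integer in the interval $\left(\frac{1}{1-\smax}-\epsilon,\ \frac{1}{1-\smax}\right]$, which has length $\epsilon$ and contains no integer for small $\epsilon$ unless $\frac{1}{1-\smax}$ happens to be integral (e.g.\ $\smax=0.7$ gives $\frac{1}{1-\smax}=\frac{10}{3}$, and for $\epsilon<\frac13$ no admissible $k$ exists). Perturbing the small sizes slightly below $1-\smax$ does not repair this: with $k$ equal-size, unit-utility contracts the blocking ratio you generate is at most $k$, feasibility caps $k$ at $\lfloor\frac{1}{1-\smax}\rfloor$ (or forces the sizes so small that the ``bad'' hospital can also pick some of them up alongside its big contract, depressing the ratio further), so you cannot exceed $\lfloor\frac{1}{1-\smax}\rfloor<\frac{1}{1-\smax}-\epsilon$. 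The paper's construction avoids this quantization by using small contracts of size $\delta\to 0$ whose utility densities grow like $1/\delta$: the comparison is then between a full budget of high-density items (utility $\approx 1/\delta$) and a budget with a $\smax$-fraction wasted on the big contract (utility $\approx \smax+(1-\smax)/\delta$), and the ratio tends to $\frac{1}{1-\smax}$ continuously with no integrality loss.

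Second, and independently, you have not shown that \emph{every} matching admits a coalition satisfying condition~1 of the blocking definition (each deviating doctor must strictly prefer her new contract to her current one). This is the genuinely hard part. The paper handles it with a cascade of $\Theta(n)$ auxiliary hospitals $h_{l+1},\dots,h_n$ carrying contracts of geometrically decreasing utility $\alpha^{-i}$, arranged so that in any matching the displaced doctors propagate down the chain and a large set of small doctors ends up strictly preferring to move to the special hospital $h^*$. Your two-hospital template with unspecified ``auxiliary big contracts at $h_1$'' leaves exactly this step open, and the subcase you yourself flag as delicate (both hospitals locally saturated) is the one where a single alternative big contract cannot yield a ratio exceeding $\alpha>1$ without recursing on why \emph{that} contract's doctor would deviate. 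As written, the proposal is a plan rather than a proof, and its quantitative core does not reach the claimed bound.
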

We provide the instance below to prove this theorem due to the space limitation
(Please see the full version for additional details).
\begin{proof} 
Let $\delta$ be a sufficiently small positive real and $\alpha=1/(1-\smax)$.
Let $n=\lfloor 1/\delta^3\rfloor$, $m=\lceil 1/\smax\rceil$, and $l=\lfloor(1-\smax)/\delta\rfloor+1$.
 
Consider a market with $n+1$ doctors and $n-l+1$ hospitals: $D=\{d^*,d_1,\dots,d_{n}\}$ and $H=\{h^*,h_{l+1},\dots,h_n\}$. 
The set of contracts $X$ is given as
\begin{align*}
\{x^*,x^1,\dots,x^{n}\}\cup
\left\{y^{i,j}\mid \!\!\!\begin{array}{l}i=1,\dots,n,\\j=\max\{i,l+1\},\dots,n\end{array}\!\!\!\right\}, 
\end{align*}
where
\begin{align*}
\scalebox{0.83}{$\displaystyle
\arraycolsep=2pt
\begin{array}{lllll}
x^*_D=d^*,    & x^*_H=h^*,    & u(x)=\smax,               & s(x)=\smax,     &\\
x^i_D=d_i,    & x^i_H=h^*,    & u(x^i)=i\cdot\delta^3,    & s(x^i)=\delta   &(i=1,\dots,n),\ \mathrm{and}\\
y^{i,j}_D=d_i,& y^{i,j}_H=h_j,& u(y^{i,j})=\alpha^{-i},   & s(y^{i,j})=\smax&(\substack{i=1,\dots,n,\hfill\\j=\max\{i,l+1\},\dots,n}).
\end{array}$}
\end{align*}
The doctors' preferences are given as follows:
\begin{align*}
&\scalebox{0.85}{$\displaystyle \succ_{d_0}:~x^*,$}\\
&\scalebox{0.85}{$\displaystyle \succ_{d_i}:~x^i\succ_{d_i}y^{i,l+1}\succ_{d_i}\dots\succ_{d_i}y^{i,n} \quad(i=1,\dots,l),\ \mathrm{and}$}\\
&\scalebox{0.85}{$\displaystyle \succ_{d_i}:~y^{i,i}\succ_{d_i}x^i\succ_{d_i}y^{i,i+1}\succ_{d_i}\dots\succ_{d_i}y^{i,n} \quad(i=l+1,\dots,n).$}
\end{align*}
Then, we claim that there exists no $\left(\alpha-\epsilon\right)$-stable matching 
in the market instance. % by contradiction. 
\end{proof}

Note that we have no non-trivial lower bound of the approximation ratio when $\smax\le 1/2$. 
%% AI: 何故以下の一文がここに必要ですか？reivewerが何か言っていたっけ？
% Our first mechanism achieves the best possible bound when $\smax>1/2$.
% AI: これでどうですか？
% One might think that assuming $\smax > 1/2$ is rather strict
% because this intuitively means that some hospitals can hire only one doctor.
% However, our first mechanism still achieves the best possible bound when $\smax>1/2$ and 
% it is not worse than $2$-approximate when $\smax$ is at most $1/2$, 
% though we have no non-trivial lower bound of the approximation ratio. 

%% However, our first mechanism achieves the best possible bound and
%% if $\smax$ is close to $1/2$, it becomes a $2$-approximation algorithm.
%% Then, even if $\smax$ is less than $1/2$, the ratio is never worse off, though the precise ratio is not derived at present. 

\section{Framework for Approximate Stability}

% To deal with the nonexistence of stable matchings, 
% we focus on an approximately stable matching where stability may be violated to some extent.
This section introduces a new property, which we call \textit{$\alpha$-approximation}, 
and modifies the matching with contract framework~\cite{Hatfield:AER:2005}. 
Let $\mathcal{A}(X_h)$ be the set of sequences of distinct contracts in $X_h$.
A sequential choice function $\Ch_h$ is called $\alpha$-approximate 
if $\Ch_h$ always chooses a feasible subset of given contracts with utility at least $1/\alpha$ times the optimal one. 
\begin{definition}[$\alpha$-approximation]\label{def:approx}
  Given $\alpha\ge 1$, for a hospital $h\in H$,
  a sequential choice function $\Ch_h$ on $X_h$ is % said to be
  \emph{$\alpha$-approximate} if
  $s(\Ch_h(\bm{x}))\le 1$ and 
  \begin{align}
  \scalebox{0.95}{$\displaystyle
   \alpha\cdot u(\Ch_h(\bm{x}))
   \ge \max\!\left\{u(\tilde{X})\!\mid\!\!
   \begin{array}{l}
      s(\tilde{X})\le 1,\\
      |\tilde{X}_d|\le 1~(\forall d\in D),\\
      \tilde{X}\subseteq\{x_1,\dots,x_k\}
  \end{array}\!\!\!\right\}$}\label{eq:approx}
  \end{align}
  holds for any $\bm{x}=(x_1,\dots, x_k)\in\mathcal{A}(X_h)$ such that $|\Ch_h(\bm{x})_d|\le 1~(\forall d\in D)$.
\end{definition}
We note that 
a sequential choice function can be viewed as an online algorithm for a \emph{removable knapsack problem} and
an $\alpha$-competitive online algorithm~\cite{komm2016} implies an $\alpha$-approximate sequential choice function.

% \item Specialied DA, instead of Generalized DA? 
Let us next introduce a modified version of the generalized DA mechanism,
which is formally given as Algorithm~\ref{alg:GDA}.
This mechanism uses choice functions for the doctors $(\Ch_d)_{d\in D}$ and 
sequential choice functions for the hospitals $(\Ch_h)_{h\in H}$. 
Initially, each doctor is set to be unmatched. 
Then, an unmatched doctor $d$ proposes to her most preferred contract $x$ that has not been rejected 
and the proposed hospital $x_H$ chooses a set of contracts $\Ch_{x_H}(\bm{a}^{x_H})$, 
where $\bm{a}^{x_H}$ is the sequence of proposed contracts for hospital $x_H$.
The proposal procedure continues as long as an unmatched doctor has a non-rejected acceptable contract.

\begin{algorithm}[htb]
  \SetKwInOut{Input}{input}\Input{$X,{(\Ch_d)}_{d\in D},{(\Ch_h)}_{h\in H}$\quad\textbf{output:} matching}
  %\SetKwInOut{Input}{input}\Input{$X,{(\Ch_d)}_{d\in D},{(\Ch_h)}_{h\in H}$}
  %\SetKwInOut{Output}{output}\Output{matching $X'\subseteq X$}
  \caption{Generalized DA}\label{alg:GDA}
  $X'\ot\emptyset$, $R\ot\emptyset$, $I\ot D$\;
  $\bm{a}^h\ot ()$ for all $h\in H$\;
  \While{$I\ne\emptyset$}{
    pick $d\in I$ arbitrarily\;\label{line:pick_contract}
    let $\{x\}=\Ch_d(X_d\setminus R_{d})$\;
    append $x$ to the end of $\bm{a}^{x_H}$\;
    $X'\ot \Ch_{x_H}(\bm{a}^{x_H})\cup \bigcup_{h\in H\setminus\{x_H\}}X'_h$\;
    $R \ot R\cup\{x\}$\;
    $I\ot\{d\in D\mid X_{d}'=\emptyset\ \text{and}\ \Ch_d(X_d\setminus R_{d})\ne\emptyset\}$\;
  }
  \Return $X'$\;
\end{algorithm}

%%% AI: ここは何故言いたいかよくわからない（何を言っているかはわかる）．
%%% YK: strategy-proofを言うときに出力がちゃんと決まらないと議論できないという問題があるので，order-invariantだとuniqueだと書いたことを思い出しました．
%Note that the output of the generalized DA depends on which doctor is selected in line~\ref{line:pick_contract}. 
%However, the output is determined uniquely if the sequential choice function for every hospital is order-invariant.
% 
We next prove that $\alpha$-approximate choice functions lead to approximately stable matchings. % when budget constraints are imposed. 
Note that although the output of the generalized DA depends on which doctor is selected in line~\ref{line:pick_contract}, 
Theorem~\ref{thm:GDA} holds regardless of the output of the mechanism. 
\begin{theorem}\label{thm:GDA}
  If the sequential choice function $\Ch_h$ is $\alpha$-approximate for every hospital $h\in H$,
  then the generalized DA produces an $\alpha$-stable matching $X'\subseteq X$. 
\end{theorem}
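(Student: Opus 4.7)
The plan is to proceed by contradiction, preceded by a brief verification that the output $X'$ is a matching. The size bound $s(X'_h) \le 1$ is immediate from Definition~\ref{def:approx}, since at termination $X'_h = \Ch_h(\bm{a}^h)$ and $\alpha$-approximation already enforces $s(\Ch_h(\bm{x})) \le 1$. The per-doctor bound $|X'_d| \le 1$ follows from the DA invariant that a doctor proposes only while her current assignment is empty, combined with the constraint $|\tilde{X}_d| \le 1$ that Definition~\ref{def:approx} places on every hospital's chosen set.

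Next, assume for contradiction that some hospital $h$ admits an $\alpha$-blocking coalition $X'' \subseteq X_h$. Let $A^h$ denote the underlying set of contracts in $\bm{a}^h$ at termination, i.e., every contract ever proposed to $h$ during the algorithm. The central claim I would establish is $X'' \subseteq A^h$.

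For $x \in X'' \cap X'_h$ the inclusion is immediate because $X'_h = \Ch_h(\bm{a}^h) \subseteq A^h$. For $x \in X'' \setminus X'$, I would use a monotonicity-of-proposals argument: in Algorithm~\ref{alg:GDA} each doctor proposes her top-ranked non-rejected acceptable contract, and she does so only while unmatched. Hence, if $x_D$ terminates matched to some $x' \in X'_{x_D}$ with $x \succ_{x_D} x'$, then $x$ must already have been added to the rejection set $R$ strictly before $x_D$'s successful proposal of $x'$; if instead $x_D$ terminates unmatched, the termination condition $I = \emptyset$ combined with $x \succ_{x_D} \emptyset$ forces every acceptable contract of $x_D$ (in particular $x$) to lie in $R$. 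Either way $x$ was proposed to $h = x_H$ and therefore $x \in A^h$.

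With $X'' \subseteq A^h$ established, $X''$ is a feasible candidate for the maximum in \eqref{eq:approx} applied to $\bm{a}^h$, since a matching-for-$h$ satisfies $s(X'') \le 1$ and $|X''_d| \le 1$ for every $d$. The $\alpha$-approximation property of $\Ch_h$ then yields $\alpha \cdot u(X'_h) = \alpha \cdot u(\Ch_h(\bm{a}^h)) \ge u(X'')$, contradicting $u(X'') > \alpha \cdot u(X'_h)$. The principal obstacle in this plan is the monotonicity step for contracts in $X'' \setminus X'$: one must rule out the possibility that $x_D$ reached her final (inferior or empty) match without ever proposing $x$, which relies on the invariants that $R$ is monotonically growing and that an unmatched doctor always picks her globally top non-rejected acceptable contract.
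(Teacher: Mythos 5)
Your proposal is correct and follows essentially the same route as the paper's proof: show that every contract in the blocking coalition $X''$ must have been proposed to $h$ (so $X''$ lies in the underlying set of $\bm{a}^h$), then invoke the $\alpha$-approximation inequality \eqref{eq:approx} to contradict $u(X'')>\alpha\cdot u(X_h')$. You merely spell out in more detail the proposal-monotonicity argument that the paper compresses into one line.
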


\begin{proof}
Let $\bm{a}^h=(x^h_1,\dots,x^h_{i_h})$ for each $h\in H$.
Note that $X'$ is a matching and
$X'_h=\Ch_h(\bm{a}^{h})=\Ch_h(x^{h}_1,\dots,x^h_{n})$ by the definition of the algorithm.

We prove that $X'$ is $\alpha$-stable by contradiction.
Suppose that $X''\subseteq X_h$ is an $\alpha$-blocking coalition for $X'$.
Then, $X''\subseteq\{x^h_1,\dots,x^h_{i_h}\}$ since $x\succ_{x_D} x'$ holds for any $x\in X''\setminus X'$ and $x'\in X_{x_D}'$.
Thus, we have 
\begin{align*}
u(X'')&>\alpha\cdot u(X_h')=\alpha\cdot u(\Ch_h(\bm{a}^h))\\
&\ge \max\left\{u(\tilde{X})\mid \!
\scalebox{0.95}{$\displaystyle
\begin{array}{l}
s(\tilde{X})\le 1,\\
|\tilde{X}_d|\le 1~(\forall d\in D),\\
\tilde{X}\subseteq\{x^h_{1},\dots,x^h_{i_h}\}
\end{array}$}
\!\!\right\}\ge u(X''),
\end{align*}
which is a contradiction.
Here, the first inequality holds since $X''$ is an $\alpha$-blocking coalition
and the second inequality holds since $\Ch_h$ is $\alpha$-approximate and $|\Ch_h(\bm{a}^h)_d|\le 1~(\forall d\in D)$.
\end{proof}

It must be emphasized that this framework can be applied to other types of constraints,
e.g., \textit{matroid intersection} constraints. 
The typical example is diversity constraints in school choice programs 
where a school is required to balance the composition of students, 
typically in terms of socioeconomic status~\cite{kurata:jair:2017}. 
The framework also works even when each hospital has a \textit{submodular} utility. %, instead of an additive one.
Designing such mechanisms is our immediate future work and 
it will be achieved with online algorithms 
for the corresponding problems~\cite{ashwinkumar2011,HKMG2014,chakrabarti2015,buchbinder2015,HMMC2017,CJTW2017}. 

\section{Approximately Stable Mechanisms}
In matching with constraints~\cite{kamada:aer:2015,Goto:aij:2016,kurata:jair:2017}, 
designing a desirable mechanism essentially tailors the choice functions for hospitals
to satisfy the necessary properties and constraints simultaneously. 
The design task is difficult because the given sequences of contracts for hospitals
depend on the (sequential) choice functions themselves and, hence, they are unpredictable.
We tackle this challenging task as an analogue to online algorithms for knapsack problems. 

\subsection{Non-Strategy-Proof Stable Mechanism}
This subsection proposes an approximately stable mechanism
that achieves the best possible stability bound, % shown in Theorem~\ref{thm:knapsack_lower},
but is not strategy-proof for doctors. 
% i.e., this mechanism outputs a $\alpha$-stable matching 
% where $\alpha$ is at most $\frac{1}{1-\smax}$ for any $h\in H$. 
%
The sequential choice functions greedily choose contracts 
according to decreasing order of utility per size (i.e., utility per wage).
Formally, 
let us consider the functions defined in Algorithm~\ref{alg:knapsack}.\footnotemark
\footnotetext{When ties occur in the argmin, we break these ties by choosing the one with the smallest index.}
Note that we can compute the choice in $O(k\log k)$ time. 

\begin{algorithm}
\SetKwInOut{Input}{input}\Input{\small $(x_1,\dots,x_k)\in\mathcal{A}(X_h)$\quad\textbf{output:} $\Ch_h(x_1,\dots,x_k)$}
\caption{}\label{alg:knapsack}
\lIf{$k=0$}{\Return $\emptyset$}
let $Y\ot \Ch_h(x_1,\dots,x_{k-1})\cup\{x_k\}$\;
\While{$s(Y)>1$}{
  $Y\ot Y\setminus\{a\}$ where $a\in\argmin\big\{\frac{u(x)}{s(x)}\mid x\in Y\big\}$\;
}
\Return $Y$\;
\end{algorithm}

Let us next illustrate this mechanism via an example.
\begin{example} 
Consider a market with four doctors $D=\{d_1,d_2,d_3,d_4\}$ and two hospitals $H=\{h_1,h_2\}$.
The set of offered contracts $X$ is $\{x^{i,j}\mid i=1,\dots,4,\ j=1,2\}$, 
where $x^{i,j}_D=d_i$, $x^{i,j}_H=h_j$ ($i=1,\dots,4$, $j=1,2$).
We assume that $x^{i,1}\succ_{d_i} x^{i,2}$ for $i=1,2,3$ and $x^{4,2}\succ_{d_4} x^{4,1}$.
The contracts' utilities and sizes are provided in Table~\ref{tab:ex:1}. 

\begin{table}[htb]
\centering
\caption{Utilities and Sizes of the Contracts.}
\label{tab:ex:1}
\resizebox{0.9\columnwidth}{!}{%
\begin{tabular}[htbp]{c||c|c|c||c|c|c}
                    &\multicolumn{3}{|c||}{$x^{i,1}\in X_{h_1}$}&\multicolumn{3}{|c}{$x^{i,2}\in X_{h_2}$}\\\hline
                    &$u$   &$s$   &$u/s$   &$u$  &$s$&$u/s$\\\hline
$x^{1,j}\in X_{d_1}$&$111$ &$0.57$&$194.7$ &$30$ &$0.56$& $53.6$ \\
$x^{2,j}\in X_{d_2}$&$98$  &$0.50$&$196.0$ &$40$ &$0.55$& $72.7$ \\
$x^{3,j}\in X_{d_3}$&$83$  &$0.42$&$197.6$ &$10$ &$0.60$& $16.7$ \\ 
$x^{4,j}\in X_{d_4}$&$110$ &$0.55$&$200.0$ &$20$ &$0.45$& $44.4$
\end{tabular}}
\end{table}

Initially, we set $X'=R=\emptyset$, $I=\{d_1,\dots,d_4\}$, and $\bm{a}^{h_i}=()$ $(i=1,2)$.

Suppose that we select $d_1\in I$ at the beginning of the first iteration. 
Then, $d_1$ chooses her most preferred contract $x^{1,1}$ 
and we have $\bm{a}^{h_1}=(x^{1,1})$, $\bm{a}^{h_2}=()$, $X'=\{x^{1,1}\}$, $R=\{x^{1,1}\}$, $I=\{d_2,d_3,d_4\}$ 
at the end of the first iteration. 

Suppose that we pick $d_2\in I$ at the beginning of the second iteration. 
Then, $d_2$ chooses her most preferred contract $x^{2,1}$.
As $h_1$ cannot keep $x^{1,1}$ and $x^{2,1}$ together, it accepts $x^{2,1}$ and rejects $x^{1,1}$.
Thus, we have $\bm{a}^{h_1}=(x^{1,1},x^{2,1})$, $\bm{a}^{h_2}=()$, $X'=\{x^{2,1}\}$, $R=\{x^{1,1},x^{2,1}\}$, $I=\{d_1,d_3,d_4\}$
at the end of the second iteration. 

Similarly, suppose that we choose a doctor with the smallest index in $I$ at the beginning of each iteration.
Our algorithm then works as in Table~\ref{tab:ex:1:round}. 

\begin{table}[hbt]
\centering
\caption{Our Algorithm's Procedure.}
\label{tab:ex:1:round}
\setlength{\tabcolsep}{1pt}
\resizebox{\linewidth}{!}{%
\begin{tabular}[htbp]{cccccc}
iter. & $X'$                          & $I$                   & $\bm{a}^{h_1}$              & $\bm{a}^{h_2}$ \\\hline
0     & $\emptyset$                   & $\{d_1,d_2,d_3,d_4\}$ & $()$                        & $()$      \\ 
1     & $\{x^{1,1}\}$                 & $\{d_2,d_3,d_4\}$     & $(x^{1,1})$                 & $()$      \\
2     & $\{x^{2,1}\}$                 & $\{d_1,d_3,d_4\}$     & $(x^{1,1},x^{2,1})$         & $()$      \\
3     & $\{x^{1,2},x^{2,1}\}$         & $\{d_3,d_4\}$         & $(x^{1,1},x^{2,1})$         & $(x^{1,2})$  \\
4     & $\{x^{1,2},x^{2,1},x^{3,1}\}$ & $\{d_4\}$             & $(x^{1,1},x^{2,1},x^{3,1})$ & $(x^{1,2})$  \\
5     & $\{x^{1,2},x^{2,1},x^{3,1}\}$ & $\{d_4\}$             & $(x^{1,1},x^{2,1},x^{3,1})$ & $(x^{1,2},x^{4,2})$  \\
6     & $\{x^{1,2},x^{3,1},x^{4,1}\}$ & $\{d_2\}$             & $(x^{1,1},x^{2,1},x^{3,1},x^{4,1})$ & $(x^{1,2},x^{4,2})$  \\
7     & $\{x^{2,2},x^{3,1},x^{4,1}\}$ & $\emptyset$           & $(x^{1,1},x^{2,1},x^{3,1},x^{4,1})$ & $(x^{1,2},x^{4,2},x^{2,2})$ 
\end{tabular}}
\end{table}

To conclude, our algorithm returns the matching $X'=\{x^{2,2},x^{3,1},x^{4,1}\}$.
Although $X''=\{x^{1,1},x^{3,1}\}$ is a blocking coalition,
$X'$ is almost stable.
%$X^*=\{x^{1,1},x^{2,2},x^{3,1},x^{4,2}\}$ is a stable
\end{example}

We argue that the sequential choice function $\Ch_h$ satisfies the following property: 
\begin{lemma}\label{lem:knapsack}
  The sequential choice function $\Ch_h$ defined in Algorithm~\ref{alg:knapsack} is $\frac{1}{1-\smax}$-approximate. 
\end{lemma}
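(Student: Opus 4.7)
The plan is to adapt the classical analysis of greedy for knapsack to the online removable setting, by showing that Algorithm~\ref{alg:knapsack} always retains the ratio-sorted greedy prefix of the items it has seen. Write $Y^*:=\Ch_h(x_1,\dots,x_k)$; that $s(Y^*)\le 1$ is immediate from the while-loop termination. For the approximation inequality, fix any $\tilde X\subseteq\{x_1,\dots,x_k\}$ with $s(\tilde X)\le 1$ and $|\tilde X_d|\le 1$, and aim for $u(\tilde X)\le\frac{1}{1-\smax}u(Y^*)$. Dropping the per-doctor constraint only enlarges the feasible set, so it suffices to upper-bound the fractional knapsack LP value $L$ over $\{x_1,\dots,x_k\}$ by $\frac{1}{1-\smax}u(Y^*)$.

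Assume (WLOG, via an infinitesimal utility perturbation consistent with the smallest-index tie-breaking in Algorithm~\ref{alg:knapsack}) that the ratios $u(x_i)/s(x_i)$ are all distinct, and relabel so that $u(x_{(1)})/s(x_{(1)})>\dots>u(x_{(k)})/s(x_{(k)})$. Let $p$ be the smallest index with $\sum_{i\le p}s(x_{(i)})>1$, set $P=\{x_{(1)},\dots,x_{(p-1)}\}$, and let $x^c=x_{(p)}$. (If no such $p$ exists, every item fits, the algorithm makes no removal, and $Y^*=\{x_1,\dots,x_k\}$ trivially satisfies the bound.) Standard fractional-knapsack analysis gives $L\le u(P)+u(x^c)$. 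Since $u(x^c)/s(x^c)\le u(x)/s(x)$ for every $x\in P$, averaging yields $u(x^c)/s(x^c)\le u(P)/s(P)$; and because $s(P)+s(x^c)>1$ with $s(x^c)\le\smax$, we have $s(P)>1-\smax$. Hence $u(x^c)\le\frac{\smax}{1-\smax}u(P)$, and therefore $L\le\frac{1}{1-\smax}u(P)$.

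The core of the argument is the set inclusion $P\subseteq Y^*$, which immediately gives $u(Y^*)\ge u(P)$. I would prove it by contradiction: suppose some element of $P$ is ever removed, and let $p^*\in P$ be the last such element, evicted at step $t^*$; just before its eviction, the current working set $W\ni p^*$ satisfies $s(W)>1$. Algorithm~\ref{alg:knapsack}'s rule forces $p^*\in\argmin_{y\in W}u(y)/s(y)$, so every $y\in W$ satisfies $u(y)/s(y)\ge u(p^*)/s(p^*)\ge\min_{x\in P}u(x)/s(x)$. Under the strict-ratio assumption, the items with ratio at least this threshold are exactly $P$, so $W\subseteq P$, whence $s(W)\le s(P)\le 1$, contradicting $s(W)>1$. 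Chaining this with the LP bound finishes the proof.

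The main obstacle is this set-inclusion step: it hinges on Algorithm~\ref{alg:knapsack}'s minimum-ratio rejection rule, which both characterizes the composition of $W$ at the moment a $P$-item is evicted and squeezes $W$ inside $P$. The subtle point is handling tied ratios, since with ties the set of items with ratio at least $\min_{x\in P}u(x)/s(x)$ can strictly exceed $P$ and the inclusion $W\subseteq P$ can break; setting up the perturbation so that the algorithm's trajectory is preserved takes some care but is routine.
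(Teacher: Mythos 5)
Your proof is correct and follows essentially the same route as the paper's: both identify the ratio-sorted greedy prefix (your $P$, the paper's $\{x_{\sigma(1)},\dots,x_{\sigma(\ell)}\}$), show it survives every eviction in Algorithm~\ref{alg:knapsack} (your last-eviction contradiction just makes explicit what the paper dismisses as ``a simple induction''), and bound the optimum by $u(P)+u(x^c)\le\frac{1}{1-\smax}\,u(P)$. The only cosmetic difference is that you route through the fractional LP and the estimate $u(x^c)\le\frac{\smax}{1-\smax}u(P)$, whereas the paper uses monotonicity of the prefix-average ratios; these yield the identical inequality.
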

\begin{proof} 
Let us consider a sequence of distinct contracts $(x_1,\dots,x_k)\in\mathcal{A}(X_h)$. 
Note that $\Ch_h$ satisfies the conditions \eqref{eq:SC1} and \eqref{eq:SC2}
since $\Ch_h(x_1,\dots,x_{k})\subseteq \Ch_h(x_1,\dots,x_{k-1})\cup\{x_k\}$.
If $\sum_{i=1}^k s(x_i)\le 1$, then the sequential choice function returns all the contracts 
% $\Ch_h(x_1,\dots,x_k)=\{x_1,\dots,x_k\}$ 
and therefore satisfies~\eqref{eq:approx}. Thus, we assume that $\sum_{i=1}^k s(x_i)> 1$.

Let $\sigma\in\mathcal{S}_k$ be a permutation such that
\[
  u(x_{\sigma(1)})/s(x_{\sigma(1)})\ge\dots\ge u(x_{\sigma(k)})/s(x_{\sigma(k)}). 
\]
Let $\ell$ be the largest index such that $\sum_{i=1}^{\ell}s(x_{\sigma(i)})\le 1$.
Note that $\sum_{i=1}^{\ell+1}s(x_{\sigma(i)})> 1$.
Then, by a simple induction, we can see that $\{x_{\sigma(1)},\dots,x_{\sigma(\ell)}\}\subseteq \Ch_h(x_1,\dots,x_k)$.
Therefore, we obtain $u(\Ch_h(x_1,\dots,x_k))$ is at least
\begin{align*} 
&\sum_{i=1}^{\ell}u(x_{\sigma(i)}) \ge \frac{\sum_{i=1}^{\ell}s(x_{\sigma(i)})}{\sum_{i=1}^{\ell+1}s(x_{\sigma(i)})}\left(\sum_{i=1}^{\ell+1}u(x_{\sigma(i)})\right) \\
&= %\frac{\sum_{i=1}^{\ell}s(x_{\sigma(i)})}{\smax+\sum_{i=1}^{\ell}s(x_{\sigma(i)})}
\left(1-\frac{s(x_{\sigma(\ell+1)})}{\sum_{i=1}^{\ell+1}s(x_{\sigma(i)})}\right)
\left(\sum_{i=1}^{\ell+1}u(x_{\sigma(i)})\right)\\
&\ge (1-\smax)\left(\sum_{i=1}^{\ell+1}u(x_{\sigma(i)})\right)\\
&\ge (1-\smax)\cdot\max\left\{u(\tilde{X})\mid\! \begin{array}{l}s(\tilde{X})\le 1,\\\tilde{X}\subseteq\{x_1,\dots,x_k\}\end{array}\!\!\right\}.
%&\ge (1-\smax)\cdot\max\left\{u(\tilde{X})\mid\! \begin{array}{l}s(\tilde{X})\le 1,\\|\tilde{X}_d|\le 1~(\forall d\in D),\\\tilde{X}\subseteq\{x_1,\dots,x_k\}\end{array}\!\!\right\}.
\end{align*}
Here, the first inequality holds since $\frac{\sum_{i=1}^{j}u(x_{\sigma(i)})}{\sum_{i=1}^{j}s(x_{\sigma(i)})}$ is monotone nonincreasing for $j$
and the second inequality holds by $\sum_{i=1}^{\ell+1}s(x_{\sigma(i)})>1$ and $s(x_{\sigma(\ell+1)})\le\smax$.
Thus, it is $\frac{1}{1-\smax}$-approximate. 
\end{proof}

Now, we obtain the following theorem:
% Now, we summarize the arguments above in the following theorem: 
\begin{theorem}\label{thm:knapsack}
For any market, the generalized DA mechanism with the choice functions defined in Algorithm~\ref{alg:knapsack} 
produces a $\frac{1}{1-\smax}$-stable matching.
In addition, the mechanism can be implemented to run in $O(|X|\log|X|)$ time. % by using heaps. 
\end{theorem}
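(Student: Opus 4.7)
The plan is to combine the two results established earlier in the paper. By Lemma~\ref{lem:knapsack}, the sequential choice function $\Ch_h$ defined in Algorithm~\ref{alg:knapsack} is $\frac{1}{1-\smax}$-approximate for every hospital $h\in H$. Then Theorem~\ref{thm:GDA} immediately yields that the generalized DA mechanism equipped with these choice functions returns a $\frac{1}{1-\smax}$-stable matching, which settles the stability part of the claim. No additional argument about blocking coalitions is needed, since Theorem~\ref{thm:GDA} already handles this abstractly from $\alpha$-approximation.

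For the runtime bound, I would maintain, for every hospital $h$, the current choice $\Ch_h(\bm{a}^h)$ as a priority queue keyed by the ratio $u(x)/s(x)$, together with its running total size $s(\Ch_h(\bm{a}^h))$. When the DA loop appends a new contract $x$ to $\bm{a}^{x_H}$, we insert $x$ into the associated queue and update the sum; then, while the sum exceeds $1$, we extract the element with the smallest ratio and subtract its size, which exactly implements the inner \textbf{while} loop of Algorithm~\ref{alg:knapsack}. By the rejection-monotonicity condition~(\ref{eq:SC2}), once an element is removed from the chosen set it is never reinserted, so the total number of queue insertions and deletions over the entire execution is bounded by the total number of proposals.

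Since a rejected contract is never proposed again, the outer \textbf{while} loop in Algorithm~\ref{alg:GDA} iterates at most $|X|$ times. For each doctor $d$, maintaining a pointer into her preference list lets us evaluate $\Ch_d(X_d\setminus R_d)$ in amortized $O(1)$ time; the set $I$ can be updated in $O(1)$ per iteration by tracking the doctors just displaced (namely the contract evicted from the hospital's queue together with the proposer, if still unmatched). Each priority-queue operation costs $O(\log|X|)$, and the total number of such operations is $O(|X|)$ by the charging argument above, giving an overall running time of $O(|X|\log|X|)$.

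The whole argument reduces to citing prior results plus a bookkeeping exercise, so I do not expect a conceptual obstacle. The only point that requires a bit of care is the amortized analysis: it must be made explicit that all evictions across all hospitals, summed over the entire run of the DA, are charged to distinct insertions, which is what keeps the cumulative priority-queue work within the claimed bound rather than blowing up to $O(|X|^2\log|X|)$ under a naive per-iteration accounting.
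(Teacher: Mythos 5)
Your proof is correct and follows the same route the paper intends: the stability claim is exactly Lemma~\ref{lem:knapsack} plugged into Theorem~\ref{thm:GDA}, and the paper gives no further argument. Your amortized priority-queue analysis of the running time (charging each eviction to a distinct insertion, justified by the monotone-rejection property~\eqref{eq:SC2}) is a sound and in fact more explicit justification of the $O(|X|\log|X|)$ bound than the paper itself provides.
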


This mechanism is not strategy-proof for doctors 
(See Proposition~1 % \ref{prop:no-strategy-proof}
in the full version).

\subsection{Strategy-Proof Stable Mechanism}
This subsection proposes another stable mechanism 
that does not achieve the best possible stability bound, but is strategy-proof for doctors. 
% % To make a mechanism strategy-proof, \COMM{Refine it later}we have to sacrifice 
% % a factor of $O(\log |D|)$ for the approximation ratio.
% % Before introducing the sequential choice function, 
% % let us briefly explain a \emph{matroid},
% % which is a set system $(E,\mathcal{I})$ with the following properties: 
% % (i) $\emptyset\in\mathcal{I}$;
% % (ii) $I\subseteq J\in\mathcal{I}$ implies $I\in\mathcal{I}$; 
% % (iii) $I,J\in\mathcal{I}$, $|I|<|J|$ implies the existence of $e\in J\setminus I$ such that $I\cup\{e\}\in\mathcal{I}$.
% Before introducing sequential choice functions, 
% let us briefly explain a \emph{matroid},
% which is a set system $(X,\mathcal{F})$ with the following properties: 
% (i) $\emptyset\in\mathcal{F}$;
% (ii) $F\subseteq G\in\mathcal{F}$ implies $F\in\mathcal{F}$; 
% (iii) $F,G\in\mathcal{F}$, $|F|<|G|$ implies the existence of $x\in G\setminus F$ such that $F\cup\{x\}\in\mathcal{F}$.
% % 
% % \begin{description}
% % \item[(I1)] $\emptyset\in\mathcal{I}$,
% % \item[(I2)] $I\subseteq J\in\mathcal{I}$ implies $I\in\mathcal{I}$, and
% % \item[(I3)] $I,J\in\mathcal{I}$, $|I|<|J|$ implies the existence of $e\in J\setminus I$ such that $I\cup\{e\}\in\mathcal{I}$.
% % \end{description}
% Notice that a subset \(F\) of \(X\) is called \emph{independent set} if \(F\) belongs to \(\mathcal{F}\). 
% 
Let us consider a set system $(X_h,\mathcal{F}_h)$, where 
\[
\mathcal{F}_h=\left\{X'\subseteq X_h\mid \begin{array}{r}\left|\left\{x\in X'\mid s(x)>\frac{1}{t\gamma}\right\}\right|\le t\\ (t=1,\dots,|D|)\end{array}\right\}
\]
and $\gamma=\left\lceil\frac{1+\ln(|D|-1)}{1-\smax}\right\rceil$. 
We can easily check that the set system $(X_h,\mathcal{F}_h)$ is a transversal matroid for each $h\in H$~\cite{oxley1992mt}.
% In fact, it is a transversal matroid~\cite{oxley1992mt}.
% 
In addition, we have $s(X')\le 1$ if $X'\in\mathcal{F}_h$ and $|X'_d|\le 1~(\forall d\in D)$ because
\begin{align*}
  s(X')
  &\le \smax+\sum_{t=1}^{|D|-1}\frac{1}{t\gamma} \le \smax+\frac{1}{\gamma}\cdot\left(1+\int_1^{|D|-1}\frac{dx}{x}\right)\\
  &\le \smax+\frac{1+\ln(|D|-1)}{\gamma}\le 1.
\end{align*}

The second sequential choice function $\Ch_h$ chooses the maximum utility independent set by a greedy algorithm.
The formal definition is given in Algorithm~\ref{alg:knapsack_sp}. 
\begin{algorithm}
  \SetKwInOut{Input}{input}\Input{\small $(x_1,\dots,x_k)\in\mathcal{A}(X_h)$\quad\textbf{output:} $\Ch_h(x_1,\dots,x_k)$}
  \caption{}\label{alg:knapsack_sp}
  \lIf{$k=0$}{\Return $\emptyset$}
  let $Y\ot \Ch_h(x_1,\dots,x_{k-1})\cup\{x_k\}$\;
  \If{$Y\not\in\mathcal{F}_h$}{
    let $a\in\argmin\{u(x)\mid x\in Y,~Y\setminus\{x\}\in\mathcal{F}_h\}$\;
    $Y\ot Y\setminus\{a\}$\;
  }
  \Return $Y$\;
\end{algorithm}

We claim that the choice function satisfies the following property. 
\begin{lemma}\label{lem:matroid-approximation}
  For each hospital $h$, the sequential choice function $\Ch_h$ defined in Algorithm~\ref{alg:knapsack_sp} is 
  $\gamma~(=\left\lceil\frac{1+\ln(|D|-1)}{1-\smax}\right\rceil)$-approximate. 
\end{lemma}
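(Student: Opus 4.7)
My plan is to prove the lemma in two stages. First, I establish that $\Ch_h(x_1,\ldots,x_k)$ coincides with the maximum-utility independent set of the matroid $(X_h,\mathcal{F}_h)$ restricted to $\{x_1,\ldots,x_k\}$. Second, I exhibit, for every candidate $\tilde{X}^*$ appearing on the right-hand side of~(\ref{eq:approx}), an independent set $Y\subseteq\tilde{X}^*$ with $u(Y)\ge u(\tilde{X}^*)/\gamma$. Combining the two steps yields $u(\Ch_h(\bm{x}))\ge u(\tilde{X}^*)/\gamma$ as required, and $s(\Ch_h(\bm{x}))\le 1$ is inherited from the size bound already established in the excerpt for sets in $\mathcal{F}_h$ with at most one contract per doctor. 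The sequential-choice conditions~(\ref{eq:SC1}) and~(\ref{eq:SC2}) are also immediate: only input elements appear in the output, and any element removed at some iteration is never re-added since iteration $k{+}1$ starts from $\Ch_h(x_1,\ldots,x_k)$.

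For the first stage I invoke the classical incremental matroid exchange greedy. By induction on $k$, the invariant is that the current set $Y$ at the end of iteration $k$ is a maximum-utility independent set of $\{x_1,\ldots,x_k\}$. When $x_k$ is appended, either $Y\cup\{x_k\}\in\mathcal{F}_h$ and it is kept, or the algorithm removes the minimum-utility $a$ with $Y\setminus\{a\}\in\mathcal{F}_h$---equivalently, the minimum-utility element of the unique fundamental circuit through $x_k$ in $Y\cup\{x_k\}$. The standard matroid exchange argument then preserves the optimality invariant step by step.

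The second stage is the main obstacle. I sort $\tilde{X}^*=\{z_1,\ldots,z_p\}$ in decreasing order of size, which gives the critical bound $s(z_j)\le 1/j$ (because $j\cdot s(z_j)\le s(\tilde{X}^*)\le 1$). I then partition $\tilde{X}^*$ into the $\gamma$ arithmetic-progression subsets $Y_r=\{z_r,z_{r+\gamma},z_{r+2\gamma},\ldots\}$ for $r=1,\ldots,\gamma$. Within each $Y_r$ the elements remain sorted by decreasing size, and the $(t{+}1)$-th such element is $z_{r+t\gamma}$, whose size satisfies $s(z_{r+t\gamma})\le 1/(r+t\gamma)<1/(t\gamma)$ since $r\ge 1$. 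Hence at most $t$ elements of $Y_r$ exceed the threshold $1/(t\gamma)$ for every $t=1,\ldots,|D|$, so $Y_r\in\mathcal{F}_h$. Because $\sum_{r=1}^{\gamma}u(Y_r)=u(\tilde{X}^*)$, averaging produces some $r$ with $u(Y_r)\ge u(\tilde{X}^*)/\gamma$, and combining with the first stage gives $u(\Ch_h(\bm{x}))\ge u(Y_r)\ge u(\tilde{X}^*)/\gamma$, completing the proof.
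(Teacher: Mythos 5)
Your proof is correct and follows essentially the same route as the paper's: sort $\tilde X^*$ by decreasing size to get $s(z_j)\le 1/j$, partition into the $\gamma$ arithmetic-progression classes $\{z_r,z_{r+\gamma},\dots\}$, observe each class lies in $\mathcal{F}_h$, and compare against the greedily maintained maximum-utility independent set. You additionally spell out two steps the paper asserts without detail---the verification that $s(z_{r+t\gamma})<1/(t\gamma)$ puts each class in $\mathcal{F}_h$, and the exchange argument showing Algorithm~\ref{alg:knapsack_sp} maintains the optimal basis---which is a welcome but not substantively different elaboration.
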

\begin{proof}
  Let $(x_1,\dots,x_k)$ be the input of the choice function. 
  We prove that, for any $X^*\subseteq\{x_1,\dots,x_k\}$,
  \[\gamma\cdot u(\Ch_h(x_1,\dots,x_k))\ge u(X^*)\]
  holds when $s(X^*)\le 1$ and $|X^*_d|\le 1~(\forall d\in D)$.
  Let us consider a set of contracts $X^*=\{y_1,\dots,y_\ell\}$ where $s(y_1)\ge\dots\ge s(y_\ell)$.
  %% AI: ここから僕にはすぐわからない．．．
  %% YK: gammaの定義をミスってました（ceilの位置が変だった）
  Then, we have $s(y_i)\le 1/i$ for $i=1,\dots,\ell$ because $1\ge s(y_1)+\dots+s(y_i)\ge i\cdot s(y_i)$.
  Thus, we can partition $X^*$ into $\gamma$ sets $X^1,\dots,X^{\gamma}$ that satisfy $X^i\in\mathcal{F}_h$ for all $i=1,\dots,\gamma$.
  Indeed, we can obtain the partition by setting $X^i=\{y_i,y_{\gamma+i},y_{2\gamma+i},\dots,y_{\lfloor(\ell-i)/\gamma\rfloor\gamma+i}\}$.
  As $u(\Ch_h(x_1,\dots,x_k))\ge u(X^i)$, we obtain
  \begin{align*}
    \gamma\cdot u(\Ch_h(x_1,\dots,x_k))\ge \sum_{i=1}^\gamma u(X^i)=u(X^*),
  \end{align*}
  which proves the theorem.
\end{proof}

\begin{lemma}\label{lem:matroid-sp}
  For each hospital $h$, the sequential choice function $\Ch_h$ defined in Algorithm~\ref{alg:knapsack_sp} 
  satisfies order-invariance and size-monotonicity. 
\end{lemma}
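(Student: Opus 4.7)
The plan is to recognize Algorithm~\ref{alg:knapsack_sp} as the online matroid greedy with swaps on the transversal matroid $(X_h,\mathcal{F}_h)$, and then read off both properties from standard facts about this algorithm.

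First I would record a structural observation. By induction on $k$, the set $I := \Ch_h(x_1,\dots,x_{k-1})$ is always independent in $\mathcal{F}_h$. When $I\cup\{x_k\}$ is not independent, the matroid axioms imply that $I\cup\{x_k\}$ contains a unique circuit $C$ (necessarily containing $x_k$), and that the elements $x\in I\cup\{x_k\}$ with $(I\cup\{x_k\})\setminus\{x\}\in\mathcal{F}_h$ are exactly those of $C$. Hence the element $a$ removed in line~4 is the $u$-minimizer over the fundamental circuit $C$, with ties broken by contract index.

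For size-monotonicity, a short induction suffices: at step $k$ either $I\cup\{x_k\}\in\mathcal{F}_h$, in which case $|\Ch_h(x_1,\dots,x_k)|=|I|+1$, or exactly one element is removed from $I\cup\{x_k\}$, in which case $|\Ch_h(x_1,\dots,x_k)|=|I|$ (regardless of whether $a=x_k$ or $a\ne x_k$). Thus the cardinality is non-decreasing as contracts are appended, which is precisely the size-monotonicity condition.

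For order-invariance, I would prove by induction on $k$ that $\Ch_h(x_1,\dots,x_k)$ equals \emph{the} maximum-weight independent set of $\{x_1,\dots,x_k\}$ in $(X_h,\mathcal{F}_h)$ under the lexicographic refinement of $u$ by contract-index tie-breaking, and hence depends only on the underlying set. The inductive step is the textbook correctness of the matroid greedy: if $I\cup\{x_k\}\in\mathcal{F}_h$, then any competitor $S\subseteq\{x_1,\dots,x_k\}$ containing $x_k$ satisfies $u(S)\le u(I)+u(x_k)$ because $S\setminus\{x_k\}$ is independent in $\{x_1,\dots,x_{k-1}\}$, while any $S$ not containing $x_k$ has $u(S)\le u(I)$; if $I\cup\{x_k\}\notin\mathcal{F}_h$, the fundamental-circuit exchange shows that replacing $\argmin_{y\in C} u(y)$ by $x_k$ is optimal. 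Since the characterization depends only on $\{x_1,\dots,x_k\}$, order-invariance follows.

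The main obstacle will be handling the tie-breaking rigorously, so that ``the'' maximum-weight independent set is genuinely unique and agrees with the algorithm's own tie-breaking rule; once that is pinned down, both properties fall out of the standard matroid-greedy analysis.
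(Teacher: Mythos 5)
Your proposal is correct and takes essentially the same route as the paper: both recognize Algorithm~\ref{alg:knapsack_sp} as the online matroid greedy with cancellation on $(X_h,\mathcal{F}_h)$ and obtain order-invariance from the fact that this greedy outputs the (tie-broken) maximum-utility independent subset of the contracts seen so far, a fact the paper delegates to its citation while you sketch the fundamental-circuit exchange and the lexicographic tie-breaking explicitly. The only cosmetic difference is in size-monotonicity, where the paper notes that $|\Ch_h(x_1,\dots,x_k)|$ equals the matroid rank of $\{x_1,\dots,x_k\}$ and uses monotonicity of rank, whereas you count directly that each arrival changes the cardinality by $0$ or $+1$; both arguments are sound.
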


\begin{proof}
The choice function is clearly order-invariant, 
since the greedy algorithm for matroids, which finds the maximum utility independent set,
is not affected by the input sequence order~\cite{babaioff2009}. 
Moreover, as $|\Ch_h(x_1,\dots,x_{k})|$ is the rank of $\{x_1,\dots,x_k\}$ in the matroid $\mathcal{F}_h$,
the choice function is size-monotone.  
%from the definition, we have $|\Ch_h(x_1,\dots,x_{k-1})|\le |\Ch_h(x_1,\dots,x_{k})|$
%and hence the choice function is size-monotone.  
\end{proof}

% Now, we summarize the above arguments in the following theorem: 
Now, we obtain the following theorem: 
\begin{theorem}\label{thm:knapsack_sp}
For any market, the generalized DA mechanism with the choice functions defined in Algorithm~\ref{alg:knapsack_sp} 
is strategy-proof for doctors and produces a $\left\lceil\frac{1+\ln(|D|-1)}{1-\smax}\right\rceil$-stable matching.
In addition, the mechanism can be implemented to run in $O(|X|^2)$ time. 
\end{theorem}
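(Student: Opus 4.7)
The plan is to combine Lemmas~\ref{lem:matroid-approximation} and~\ref{lem:matroid-sp} with Theorem~\ref{thm:GDA} and the Hatfield--Milgrom machinery that has already been imported in the framework section. For the $\gamma$-stability half with $\gamma=\lceil(1+\ln(|D|-1))/(1-\smax)\rceil$, Lemma~\ref{lem:matroid-approximation} gives that each $\Ch_h$ in Algorithm~\ref{alg:knapsack_sp} is $\gamma$-approximate, so Theorem~\ref{thm:GDA} directly produces a $\gamma$-stable matching. Nothing additional is needed here; it is just a quotation of the two results.

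For strategy-proofness, I appeal to Lemma~\ref{lem:matroid-sp}. Order-invariance makes the sequential choice function collapse to a well-defined standard choice function; as stated in the text immediately following the definition of order-invariance, such a function satisfies substitutability. Size-monotonicity then upgrades it to satisfy the law of aggregate demand, and the footnote on Aizerman's lemma gives irrelevance of rejected contracts automatically. The three properties together are exactly the hypotheses of Hatfield and Milgrom's strategy-proofness theorem for the generalized DA, so strategy-proofness for doctors follows.

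For the running time, I would argue as follows. Every pass through the while loop in Algorithm~\ref{alg:GDA} appends exactly one contract to some $\bm{a}^h$ and adds that contract to $R$, so there are at most $|X|$ iterations. Each iteration requires a constant number of choice-function evaluations, and a single incremental call to Algorithm~\ref{alg:knapsack_sp} at $h$ consists of (i) testing $Y\in\mathcal{F}_h$ and (ii) identifying an argmin-utility element whose removal restores independence. Both operations can be implemented in $O(|X|)$ time: testing $Y\in\mathcal{F}_h$ reduces to counting, for each threshold $t\in\{1,\dots,|D|\}$, how many elements of $Y$ have size exceeding $1/(t\gamma)$ (after sorting $Y$ by size once, a single sweep suffices), and the argmin step is a linear scan over $Y$. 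Multiplying by $|X|$ iterations yields the $O(|X|^2)$ bound.

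The main obstacle is the strategy-proofness step: although the ingredients are already laid out in the paper, the argument hinges on the fact that the matroid-greedy sequential choice function genuinely induces a standard choice function (i.e.\ that $\Ch_h(\bm{x})$ depends only on the underlying set, not the order), and one must confirm this is the same function the Hatfield--Milgrom framework references. Lemma~\ref{lem:matroid-sp} discharges exactly this point by invoking the order-independence of the matroid greedy, so the remaining work is essentially bookkeeping; no further combinatorial argument is needed.
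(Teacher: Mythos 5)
Your proposal is correct and follows exactly the route the paper intends: the $\gamma$-stability half is Lemma~\ref{lem:matroid-approximation} fed into Theorem~\ref{thm:GDA}, and strategy-proofness is Lemma~\ref{lem:matroid-sp} combined with the Hatfield--Milgrom characterization via order-invariance (substitutability) and size-monotonicity (law of aggregate demand, with irrelevance of rejected contracts from the Aizerman--Malishevski footnote). The paper states the theorem without writing out this assembly or the running-time bookkeeping, so your additional $O(|X|^2)$ accounting is a harmless (and correct) elaboration of what the authors leave implicit.
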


\section{Proportional Utilities}
This section examines a special case % of hospitals' utilities 
in which each hospital has a utility over a set of contracts 
that is proportional to its size. % the total amount of wages. 
Formally, for every $h\in H$ and $x\in X_h$, 
$u(x)=\beta_h\cdot s(x)$ holds where $\beta_h~(>0)$ is a constant that only depends on the hospital. 
% Formally, for every $h\in H$, $X'\in X_h$, and a constant $\beta_h(>0)$, 
% \[f_h(X')=\beta_h\cdot w_h(X')\] holds.
We call such a market a \emph{proportional market}. 
W.l.o.g., % Without loss of generality,
we may assume that $\beta_h=1$ for all $h\in H$.
% ...\COMM{Summarize the contents of this section?}

% \item Intractability result: このspecial caseでの下界
%     In this case,
To admit an approximately stable matching, 
the amount of approximation improves to a constant. 
We first provide an intractability result.
\begin{theorem}\label{thm:prop_nonexist}
  For any $\epsilon>0$,
  there exists a proportional market with no $\left(\frac{1+\sqrt{5}}{2}-\epsilon\right)$-stable matching.
\end{theorem}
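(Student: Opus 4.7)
The plan is to prove Theorem~\ref{thm:prop_nonexist} by giving an explicit small proportional market instance (depending on $\epsilon$) and showing, via exhaustive case analysis on feasible matchings, that each one admits a $(\phi - \epsilon)$-blocking coalition, where $\phi = (1+\sqrt{5})/2$.

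The algebraic fact driving the construction is the golden-ratio identity $\frac{1}{\phi}+\frac{1}{\phi^{2}}=1$, together with $\frac{2}{\phi}>1$. This means a hospital (budget $1$) can be packed to full utility either by a single size-$1$ contract or by a single pair of size-$\frac{1}{\phi}$ and size-$\frac{1}{\phi^{2}}$ contracts, but \emph{not} by two size-$\frac{1}{\phi}$ contracts. So I would populate the instance with contracts whose sizes are drawn from $\{1,\frac{1}{\phi},\frac{1}{\phi^{2}}\}$ (possibly perturbed by a tiny $\delta>0$ that we let shrink to $0$ as $\epsilon\to 0$, so the worked ratios approach $\phi$ but exceed $\phi-\epsilon$). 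Doctor preferences are chosen so that each doctor's top choice lies at a different hospital from her ``large'' option, creating a cyclic conflict: whichever feasible packing each hospital adopts, at least one doctor is pushed onto her less-preferred contract, and the vacated ``preferred'' contract sits at a hospital that is under-packed.

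The concrete steps are: (i) write down the market—doctors, hospitals, contracts with sizes and utilities ($u(x)=s(x)$ by proportionality), and preferences; (ii) enumerate the feasible matchings using the packing constraint (typically each hospital has at most one size-$1$ contract, or exactly one $\frac{1}{\phi}/\frac{1}{\phi^{2}}$ pair, or some smaller subset); (iii) for every matching $X'$, display one hospital $h$ and a set $X''\subseteq X_h$ satisfying the two conditions of an $\alpha$-blocking coalition with $\alpha=\phi-\epsilon$. The canonical blocker in case (a) takes a hospital currently holding only a single size-$\frac{1}{\phi}$ contract and swaps to a size-$1$ contract offered by a doctor whose current assignment is strictly worse, giving utility ratio $\frac{1}{1/\phi}=\phi>\phi-\epsilon$. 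In case (b), where both hospitals are already ``full'' at utility $1$, the cyclic preferences must force a doctor onto her smaller contract, and we exhibit a swap of the form ``drop a size-$\frac{1}{\phi^{2}}$ contract, pick up a size-$1$ contract,'' again giving ratio $\phi$.

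The main obstacle, and the subtle part of the construction, is ruling out case (b) being itself stable. Naive symmetric constructions (e.g., a triangle of doctors each preferring the ``cross'' contract) admit a matching in which every doctor simultaneously gets her top contract, which is then trivially stable. To avoid this I would introduce an extra doctor (or a parity breaker) whose only contract is size-$1$ at one fixed hospital, forcing that hospital's ``full'' packings to exclude someone from the cycle and thereby re-open a blocking opportunity—analogous to the 4-cycle obstruction used in the roommates $\phi$-lower bound of \citeauthor{arkin2009}~\shortcite{arkin2009}. Once this asymmetry is baked in, verifying the bound on each of the (finitely many) feasible matchings is a mechanical computation comparing $u(X'')/u(X'_h)$ against $\phi-\epsilon$, with the slack being $\Theta(\epsilon)$ and thus strictly positive for sufficiently small $\delta$.
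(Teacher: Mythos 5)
Your overall strategy---an explicit small proportional market whose sizes exploit the golden-ratio identity $1/\phi+1/\phi^{2}=1$ together with $2/\phi>1$, perturbed slightly and equipped with conflicting preferences so that every matching leaves some hospital a factor-$\phi$ improvement---is exactly the strategy of the paper's proof. You have also correctly identified the crux: a naive symmetric construction admits a matching in which everyone gets her top contract, so an asymmetry is needed. In the paper this role is played by a doctor $d_1$ whose \emph{only} contract $x$ sits at $h_1$ with size slightly above $1/\phi$: it is large enough that $h_1$ cannot hold $x$ together with the size-$1/\phi^{2}$ contract, yet small enough that replacing $x$ by the pair $\{$size $1/\phi$, size $1/\phi^{2}\}$ multiplies $h_1$'s utility by (essentially) $\phi$.

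The gap is that the instance is never actually written down, and for this theorem the instance \emph{is} the proof; everything you defer to steps (i)--(iii) is the entire content of the argument, and you explicitly label the decisive part (``ruling out case (b) being itself stable'') as an obstacle you \emph{would} handle rather than one you have handled. For the record, the realization is tiny: three doctors, two hospitals, contracts $x$ ($d_1$ at $h_1$, size $1/\phi+\delta$), $y$ ($d_2$ at $h_1$, size $1/\phi^{2}$), $\hat y$ ($d_2$ at $h_2$, size $1$), $z$ ($d_3$ at $h_1$, size $1/\phi$), $\hat z$ ($d_3$ at $h_2$, size $\delta$), with $y\succ_{d_2}\hat y$ and $\hat z\succ_{d_3} z$; one then checks the handful of feasible matchings (e.g.\ $\{x,\hat y\}$ is blocked by $\{y,z\}$ at $h_1$, $\{y,z\}$ is blocked by $\{\hat z\}$ at the empty $h_2$, $\{x,\hat z\}$ is blocked by $\{\hat y\}$ at $h_2$, and so on). Two further cautions if you complete the writeup: first, the perturbation parameter must be taken small \emph{relative to} $\epsilon$ (the tight ratios are of the form $\frac{1}{1/\phi+\delta}=\phi-\Theta(\delta)$, so $\delta$ of order $\epsilon$ with the wrong constant fails to clear $\phi-\epsilon$); second, your ``case (a)'' blocker is not always a single size-$1$ contract replacing a size-$1/\phi$ one---in the tight configurations the blocking coalition is the \emph{pair} $\{1/\phi,1/\phi^{2}\}$ replacing the perturbed contract, and condition 1 of the blocking definition forces you to track which doctors currently hold something they prefer, which is precisely where the asymmetric preferences earn their keep.
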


\begin{proof}%\COMM{Want to shorten it more..., or omit...}
Let $\phi=\frac{1+\sqrt{5}}{2}$. 
Consider a market with three doctors $D=\{d_1,d_2,d_3\}$ and two hospitals $H=\{h_1,h_2\}$.
The set of contracts is $X=\{x,y,\hat{y},z,\hat{z}\}$, where 
\begin{align*}
  \begin{array}{llll}
    x_D=d_1,      & x_H=h_1,      & s(x)=1/\phi+\epsilon/2,\\
    y_D=d_2,      & y_H=h_1,      & s(y)=(1/\phi)^2,\\
    \hat{y}_D=d_2,& \hat{y}_H=h_2,& s(\hat{y})=1,\\
    z_D=d_3,      & z_H=h_1,      & s(z)=1/\phi,\ \mathrm{and}\\
    \hat{z}_D=d_3,& \hat{z}_H=h_2,& s(\hat{z})=\epsilon.
  \end{array}
\end{align*}
The doctors' preferences are given as follows: 
\begin{align*}
\succ_{d_1}:~x,\quad \succ_{d_2}:~y\succ_{d_2}\hat{y},\quad \succ_{d_3}:~\hat{z}\succ_{d_3}z.
\end{align*}

If $X'=\{x,\hat{y}\}$, then $X''=\{y,z\}$ is a $(\phi-\epsilon)$-blocking coalition for $h_1$
since $\frac{1/\phi+(1/\phi)^2}{1/\phi+\epsilon/2}=\frac{\phi}{1+\epsilon\phi/2}<\phi-\epsilon$.
Similarly, we can easily check that
there exists a $(\phi-\epsilon)$-blocking coalition for any matching
and hence this market has no $(\phi-\epsilon)$-stable matching.
\end{proof}
However, distinguishing whether a given proportional market
has $\left(\frac{1+\sqrt{5}}{2}-\epsilon\right)$-stable matching or not is NP-hard
(See Theorem~9 %\ref{thm:porp_hard}
in the full version). 

% \subsection{Non-Strategy-Proof Stable Mechanism}
\subsection{Non-Strategy-Proof $\mathbf{1.62}$-Stable Mechanism}
This subsection proposes a $\frac{1+\sqrt{5}}{2}~(\approx 1.62)$-stable mechanism. 
We denote the golden ratio by $\phi~(=\frac{1+\sqrt{5}}{2})$. 
Note that $\phi+1=\phi^2$ holds. 
To obtain the mechanism, we apply an online algorithm provided by \citeauthor{iwama2002rok}~\shortcite{iwama2002rok}
for the (proportional case) removable knapsack problem as a sequential choice function. 
Let us divide the set of contracts into three classes---\emph{small}, \emph{medium}, and \emph{large}. 
These are defined as 
%  of small (medium, large, respectively) contracts is defined to be
\begin{align*}
   S&=\{x\in X\mid s(x)\le 1-1/\phi\},\\
   M&=\{x\in X\mid 1-1/\phi<s(x)<1/\phi\},\ \mathrm{and} \\
   L&=\{x\in X\mid s(x)\ge 1/\phi\}. 
\end{align*}
% $S=\{x\in X\mid s(x)\le 1-1/\phi\}$ ($M=\{x\in X\mid 1-1/\phi<s(x)<1/\phi\}$, $L=\{x\in X\mid s(x)\ge 1/\phi\}$, respectively). 
% 
% Our mechanism uses an algorithm for the \emph{removable online knapsack problem}
% introduced by Iwama and Taketomi~\cite{iwama2002rok}. 
% In their algorithm,
% 
It should be noted that, if the algorithm chooses a set of contracts with a total size of at least $1/\phi$, %i.e., $s(\Ch_h(\bm{x}))\ge 1/\phi$,
then it is $\phi$-approximate since
\begin{align*}
  \phi \cdot  u&(\Ch_h(\bm{x}))\ge 1 
               \ge\max\left\{u(\tilde{X})\mid\!\!
               \begin{array}{l}
               s(\tilde{X})\le 1,\\
               \tilde{X}\subseteq \{x_1,\dots,x_k\}
               \end{array}\!\!\!\right\}
\end{align*}
for any $\bm{x}=(x_1,\dots,x_k)\in\mathcal{A}(X_h)$. 

%% \begin{algorithm}
%% \SetKwInOut{Input}{input}\Input{\small $(x_1,\dots,x_k)\in\mathcal{A}(X_h)$\quad\textbf{output:} $\Ch_h(x_1,\dots,x_k)$}
%% \caption{}\label{alg:prop_knapsack}
%% Initialize $S$, $M$, $L$\;
%% \lIf{$k=0$}{\Return $\emptyset$}
%% let $Y\ot \Ch_h(x_1,\dots,x_{k-1})$\;
%% \lIf{$s(Y)\ge 1/\phi$}{\Return $Y$}
%% \lElseIf{$x_k\in L$}{\Return $\{x_k\}$}
%% \ElseIf{$x_k\in M$}{
%%   \If{$Y\cap M\ne\emptyset$}{
%%     let $\{m\}=Y\cap M$\;
%%     \lIf{$s(m)+s(x_k)\le 1$}{\Return $\{m,x_k\}$}
%%     \Else{
%%       \lIf{$s(x_k)>s(m)$}{\Return $Y$}
%%       \lElse{\Return $Y\setminus\{m\}\cup\{x_k\}$}
%%     }
%%   }
%%   \Else{
%%     \While{$s(Y)+s(x_k)>1$}{
%%       %let $a\in\argmin\{s(x)\mid x\in Y\}$ and $Y\ot Y\setminus\{a\}$\;
%%       $Y\ot Y\setminus\{a\}$ where $a\in\argmin_{x\in Y}s(x)$\;
%%     }
%%     \Return $Y\cup\{x_k\}$\;
%%   }
%% }
%% \Else(\tcp*[h]{$x_k\in S$ and $s(Y)<1/\phi$}){
%%   \Return $Y\cup\{x_k\}$\;
%% }
%% \end{algorithm}
\begin{algorithm}
\SetKwInOut{Input}{input}\Input{\small $(x_1,\dots,x_k)\in\mathcal{A}(X_h)$\quad\textbf{output:} $\Ch_h(x_1,\dots,x_k)$}
\caption{}\label{alg:prop_knapsack}
Initialize $S$, $M$, $L$\;
\lIf{$k=0$}{\Return $\emptyset$}
let $Y\ot \Ch_h(x_1,\dots,x_{k-1})$\;
\lIf{$s(Y)\ge 1/\phi$}{\Return $Y$}
\lElseIf{$x_k\in L$}{\Return $\{x_k\}$}
\Else{
    $Y\ot Y\cup\{x_k\}$\;
    \If{$s(Y\cap M)>1$}{
      $Y\ot Y\setminus\{a\}$ where $a\in\argmax_{x\in Y\cap M}s(x)$\;
    }
    \While{$s(Y)>1$}{
      $Y\ot Y\setminus\{a\}$ where $a\in\argmin_{x\in Y\cap S}s(x)$\;
    }
    \Return $Y$\;
  }
%}
\end{algorithm}

Thus, if some large contracts have been given, then it is sufficient to select one of them. 
Also, if it can choose two medium contracts together, then it is sufficient to select both because $2(1-1/\phi)>1/\phi$.
If a given sequence of contracts contains medium contracts, but 
it cannot choose two medium contracts together and no large contract exists, 
the algorithm keeps the smallest medium contract.
%This idea is used in the work by \citeauthor{iwama2002rok}~\shortcite{iwama2002rok} and we reformulate their algorithm for our setting. 
The formal definition of the algorithm is given in Algorithm~\ref{alg:prop_knapsack}. 
Note that the function $\Ch_h$ % given in Algorithm~\ref{alg:prop_knapsack}
satisfies \eqref{eq:SC1} and \eqref{eq:SC2} since $\Ch_h(x_1,\dots,x_{k})\subseteq \Ch_h(x_1,\dots,x_{k-1})\cup\{x_k\}$.

We obtain the next lemma through the techniques of \citeauthor{iwama2002rok}~\shortcite{iwama2002rok}. 
\begin{lemma}\label{lem:prop_knapsack}
  The sequential choice function defined in Algorithm~\ref{alg:prop_knapsack} is $\phi$-approximate. 
\end{lemma}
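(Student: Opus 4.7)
Since the market is proportional with (WLOG) $\beta_h = 1$, we have $u = s$, and every feasible $\tilde{X}$ has $s(\tilde{X}) \leq 1$, so the right-hand side of \eqref{eq:approx} is bounded by $1$. Writing $Y = \Ch_h(\bm{x})$, the plan is to adapt the competitive analysis of Iwama and Taketomi. The easy case is $s(Y) \geq 1/\phi$: then $\phi \cdot s(Y) \geq 1$, which already dominates any feasible $s(\tilde{X})$.

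In the nontrivial case $s(Y) < 1/\phi$, I would first show by induction that every intermediate set $Y^{(j)} := \Ch_h(x_1,\ldots,x_j)$ also satisfies $s(Y^{(j)}) < 1/\phi$, since the moment $s(\Ch_h(x_1,\ldots,x_{j-1})) \geq 1/\phi$ holds the algorithm returns $Y$ unchanged thereafter. From this invariant I would extract four structural consequences: (i) the input contains no large contract (else $Y^{(j)} = \{x_j\}$ would have $s \geq 1/\phi$); (ii) each $Y^{(j)}$ holds at most one medium, since two mediums have combined size at least $2(1-1/\phi) > 1/\phi$; (iii) no small contract is ever discarded by the \textbf{while} loop, because a small has size at most $1-1/\phi$, so removing one would leave $s(Y) > 1/\phi$; (iv) any two mediums $m_i, m_j$ in the input must satisfy $s(m_i) + s(m_j) > 1$ --- this follows because a short induction on the medium-swap dynamics shows that the currently held medium only shrinks over time (the $\argmax$ is evicted), so when the later of $m_i, m_j$ arrives the held medium $m'$ has $s(m') \leq s(m_i)$, and $s(m_i) + s(m_j) \leq 1$ would force $s(m') + s(m_j) \leq 1$ and the algorithm would keep both, contradicting the invariant.

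Combining (i)--(iv), $Y$ equals the union of all small contracts in the input with at most one medium $m'$, and any feasible $\tilde{X}^*$ likewise contains at most one medium, call it $m^\dagger$. If $m^\dagger$ is absent or equals $m'$, the bound is immediate: $s(\tilde{X}^*) \leq s(m') + s(\tilde{X}^* \cap S) \leq s(m') + s(Y \cap S) = s(Y)$. Otherwise $m' \neq m^\dagger$ are both medium, so $s(m') > 1-1/\phi$ and $s(m^\dagger) < 1/\phi$; combined with $\phi(1-1/\phi) = \phi - 1 = 1/\phi$, this gives $\phi \cdot s(m') - s(m^\dagger) > 0$, from which $\phi \cdot s(Y) > s(\tilde{X}^*)$ follows after adding the nonnegative slack $(\phi-1)\,s(Y \cap S)$.

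The main obstacle is property (iv): carefully verifying the monotonicity of the held medium through the $\argmax$-based swap rule, so that no two mediums from the input that would fit together can both evade the algorithm. Once that bookkeeping is settled, the remaining inequalities are driven entirely by the golden-ratio identity $\phi + 1 = \phi^2$.
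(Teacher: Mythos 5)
Your proof is correct, and it takes the same route the paper itself points to: the paper omits the proof of Lemma~\ref{lem:prop_knapsack} entirely, deferring to the competitive analysis of Iwama and Taketomi, which is exactly what you reconstruct (the invariant $s(Y^{(j)})<1/\phi$ in the nontrivial case, the consequences that no large arrives, no small is ever evicted, the held medium is the minimum medium seen so far, and any two input mediums overflow the budget, followed by the golden-ratio calculation). Your structural claims (i)--(iv) all check out against Algorithm~\ref{alg:prop_knapsack}, so this supplies valid details that the paper leaves to the citation rather than a genuinely different argument.
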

Now, with Theorem~\ref{thm:GDA}, we obtain the following theorem: 
% This lemma together with Theorem~\ref{thm:GDA} leads us to the following theorem. 
% Combining Lemma~\ref{lem:prop_knapsack} with Theorem~\ref{thm:GDA} gives the following theorem. 
\begin{theorem}\label{thm:prop_knapsack}
  For any proportional market, the generalized DA mechanism with the choice functions
  defined in Algorithm~\ref{alg:prop_knapsack} produces a $\phi$-stable matching.
  In addition, the mechanism can be implemented to run in $O(|X|\log|X|)$ time.
\end{theorem}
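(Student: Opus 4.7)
The stability part of the theorem is essentially immediate from what has already been established. By Lemma~\ref{lem:prop_knapsack}, the sequential choice function defined in Algorithm~\ref{alg:prop_knapsack} is $\phi$-approximate for every hospital $h\in H$. Moreover, as noted in the paragraph preceding the algorithm, this function satisfies the sequential-choice conditions~\eqref{eq:SC1} and~\eqref{eq:SC2}, because the recursion guarantees $\Ch_h(x_1,\dots,x_k)\subseteq \Ch_h(x_1,\dots,x_{k-1})\cup\{x_k\}$. Hence the hypotheses of Theorem~\ref{thm:GDA} are satisfied, and plugging these choice functions into Algorithm~\ref{alg:GDA} yields a $\phi$-stable matching. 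So the first half of Theorem~\ref{thm:prop_knapsack} is a direct corollary of Lemma~\ref{lem:prop_knapsack} together with Theorem~\ref{thm:GDA}.

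For the running time, I would first bound the number of main-loop iterations of Algorithm~\ref{alg:GDA}. Each iteration appends some contract $x$ to $R$ and never removes it, so $x$ is never proposed again; thus the loop executes at most $|X|$ times. The remaining task is to argue that each iteration can be implemented in $O(\log|X|)$ amortized time. The key observation is that the sequential nature of Algorithm~\ref{alg:prop_knapsack} allows us to maintain, for each hospital $h$, the current value of $\Ch_h(\bm{a}^h)$ across iterations rather than recomputing from scratch.

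Concretely, the plan is to maintain for each hospital $h$ the current chosen set $Y_h$ and its partition into $Y_h\cap S$, $Y_h\cap M$, and $Y_h\cap L$, together with a priority queue on $Y_h\cap M$ keyed by size (for the $\argmax$ step) and a priority queue on $Y_h\cap S$ keyed by size (for the $\argmin$ step). Classifying the new contract $x_k$ into $S$, $M$, or $L$ takes $O(1)$, and checking whether $s(Y_h)\ge 1/\phi$ only requires a stored running total. The remaining branches of Algorithm~\ref{alg:prop_knapsack} each perform $O(1)$ insertions and at most one deletion from these priority queues, since the inner \textbf{while} loop removes at most one element (once a single small contract is discarded, either $s(Y_h)\le 1$ already holds or $Y_h\cap S$ becomes empty and feasibility must then be secured by the preceding medium-adjustment). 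Each such operation is $O(\log|X|)$, and the outer DA loop picks the next doctor and computes $\Ch_d(X_d\setminus R_d)$ in $O(\log|X|)$ using a sorted list of each doctor's preferences with a pointer to her next non-rejected choice. Summing over at most $|X|$ iterations gives the claimed $O(|X|\log|X|)$ total.

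The only subtle point — and the place where I would spend most care in writing — is verifying that the \textbf{while} loop removing small contracts really runs only $O(1)$ times per iteration. This follows because before $x_k$ is inserted we had $s(Y_h)<1/\phi$, so after inserting one contract and (if necessary) removing the largest medium contract, the total size of $Y_h$ is already at most $1$, except possibly when $x_k$ itself is small; and in that case only one removal of a minimum-size small contract can be needed to restore $s(Y_h)\le 1$ since each single small contract has size at most $1-1/\phi$. With this observation in hand, the bookkeeping is routine.
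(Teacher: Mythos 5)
The stability half of your argument is exactly the paper's proof: the paper derives Theorem~\ref{thm:prop_knapsack} directly from Lemma~\ref{lem:prop_knapsack} together with Theorem~\ref{thm:GDA}, after noting (as you do) that $\Ch_h(x_1,\dots,x_k)\subseteq\Ch_h(x_1,\dots,x_{k-1})\cup\{x_k\}$ gives \eqref{eq:SC1} and \eqref{eq:SC2}. Nothing to add there.

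Your running-time argument, however, contains a step that fails: the claim that the inner \textbf{while} loop of Algorithm~\ref{alg:prop_knapsack} performs $O(1)$ removals per call. You have the case analysis backwards. When $x_k$ is \emph{small}, no removal is ever needed, since $s(Y)<1/\phi$ before insertion and $s(x_k)\le 1-1/\phi$ give $s(Y\cup\{x_k\})<1$. The problematic case is $x_k$ \emph{medium}: take $Y$ consisting of one medium contract of size $0.5$ and a hundred small contracts of size $0.001$ each (total $0.6<1/\phi$), and let $x_k$ be medium of size $0.45$. Then $s(Y\cap M)=0.95\le 1$, so no medium contract is evicted, $s(Y)=1.05$, and the \textbf{while} loop must discard roughly fifty minimum-size small contracts before $s(Y)\le 1$. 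So the per-iteration cost is not $O(\log|X|)$.

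The bound you want is still true, but you must amortize globally rather than per call. By the sequential-choice property \eqref{eq:SC2}, a contract evicted from $Y_h$ is never reinserted, and each contract belongs to exactly one hospital's sequence; hence the total number of evictions over the entire run of the generalized DA is at most $|X|$, and each insertion or eviction costs $O(\log|X|)$ with the priority queues you describe. Combined with the $O(|X|)$ bound on main-loop iterations (each iteration adds a new contract to $R$), this yields $O(|X|\log|X|)$ overall. The paper asserts the running time without proof, so with this repair your write-up would actually supply a detail the paper omits.
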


\subsection{Strategy-Proof Stable Mechanism Achieving the Best Possible Bound for Additive Utilities}

% Recall that $\smax$ indicates $\max_{x\in X}s(x)$. 
%% In the general market, achieving strategy-proofness sacrifices the utility bound,
%% while in the proportional market, it can be attained with the best possible bound. % , i.e., $\frac{1}{1-\smax}$.
%YK: proportionalの場合は，$1/(1-\smax)$は全然best possibleではないです．．．
% 黄金比より良い下界が欲しかったのですが示せませんでした
This subsection provides another stable mechanism for the proportional markets 
that does not have a constant stability bound, but is strategy-proof for doctors. 
% Here,
We employ a sequential choice function $\Ch_h$ that greedily chooses the contracts
according to increasing order of size while keeping the budget constraint. 
Formally, we define this as Algorithm~\ref{alg:prop_knapsack_sp}. 

\begin{algorithm}
\SetKwInOut{Input}{input}\Input{\small $(x_1,\dots,x_k)\in\mathcal{A}(X_h)$\quad\textbf{output:} $\Ch_h(x_1,\dots,x_k)$}
\caption{}\label{alg:prop_knapsack_sp}
Initialize $Y\ot\emptyset$\;
Sort $\{x_1,\dots,x_k\}$ according to increasing order of size\;
\For{$i=1,2,\dots,k$}{
  let $x$ be the $i$th smallest contract\;
  \lIf{$s(Y)+s(x) \le 1$}{$Y\ot Y\cup\{x\}$}
}
\Return $Y$\;
\end{algorithm}

Let us first claim that the choice function is sequential. % , i.e., it satisfies \eqref{eq:SC1} and \eqref{eq:SC2}. 
Since it clearly satisfies \eqref{eq:SC1}, let us check \eqref{eq:SC2}. %, i.e., 
% for any sequence $(x_1,\dots,x_{k+1})$, 
% {\small \[
% \{x_{1},\dots, x_{k}\}\setminus\Ch_h(x_{1},\dots, x_{k})\subseteq \{x_{1},\dots, x_{k+1}\}\setminus\Ch_h(x_{1},\dots, x_{k+1})
% \]}
W.l.o.g., %Without loss of generality,
we may assume that $s(x_1)<s(x_2)<\dots< s(x_{k})$ 
because the hospitals' choices depend not on the order of the input sequence, but on size.
If $\sum_{i=1}^{k}s(x_i)\le 1$, 
then $\{x_{1},\dots, x_{k}\}\setminus\Ch_h(x_{1},\dots, x_{k})$ must be empty and hence \eqref{eq:SC2} holds.
Thus, we assume that $\sum_{i=1}^{k}s(x_i)> 1$.

Let $\ell$ be the largest index such that $\sum_{i=1}^{\ell}s(x_i)\le 1$.
If the $(\ell+1)$st contract is chosen, the budget is exceeded. 
% Note that $\sum_{i=1}^{\ell+1}s(x_i)>1$. 
We then have $\Ch_h(x_{1},\dots, x_{k})=\{x_{1},\dots,x_{\ell}\}$ and 
$\{x_{1},\dots, x_{k}\}\setminus\Ch_h(x_{1},\dots, x_{k})=\{x_{\ell+1},\dots,x_{k}\}$.
With the $(k+1)$st contract $x_{k+1}$, if $s(x_{k+1})>s(x_{\ell+1})$, \eqref{eq:SC2} holds because $\Ch_h(x_{1},\dots, x_{k+1})=\{x_{1},\dots,x_{\ell}\}$. 
Otherwise, % i.e., $s(x_{k+1})<s(x_{\ell+1})$, 
we have $\Ch_h(x_{1},\dots, x_{k+1})\subseteq \{x_1,\dots,x_\ell,x_{k+1}\}$ and hence \eqref{eq:SC2} holds.
Thus, the choice function $\Ch_h$ is sequential. 
We further claim that it satisfies the following properties: 
\begin{lemma}
  For each hospital $h$, the choice function defined in Algorithm~\ref{alg:prop_knapsack_sp} 
  is order-invariant, size-monotone, and $\frac{1}{1-\smax}$-approximate. 
\end{lemma}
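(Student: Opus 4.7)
The plan is to first extract a structural characterization of the algorithm, from which all three properties follow cleanly. Since the algorithm sorts the inputs by nondecreasing size and then greedily tries each contract in turn, and since the accumulated total $s(Y)$ only grows, the first contract that fails the budget test forces every subsequent (at least as large) contract to also fail. Hence $\Ch_h(x_1,\dots,x_k)$ coincides with the maximal prefix $\{x_{(1)},\dots,x_{(\ell)}\}$, in the sorted order $s(x_{(1)})\le\cdots\le s(x_{(k)})$, whose total size is at most $1$.

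Order-invariance is then immediate, since this characterization depends only on the underlying set of input contracts, not on the order in which they arrive. For size-monotonicity I would argue by single-contract extension: when one new contract $y$ is appended to the input sequence, an elementary fact about order statistics says that the $i$th smallest of the enlarged set is no larger than the $i$th smallest of the original, for every valid $i$. Hence the new prefix sums $P'_i$ are pointwise bounded above by the original $P_i$, so the threshold $\ell$ can only increase. Iterating this one-element argument along the sequence yields $|\Ch_h(x_{\sigma(1)},\dots,x_{\sigma(i)})|\le|\Ch_h(x_{\sigma(1)},\dots,x_{\sigma(j)})|$ for any $\sigma$ and $i\le j$.

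For the $\tfrac{1}{1-\smax}$-approximation I would invoke proportionality, which gives $u(x)=s(x)$ for every contract. If $\ell=k$, the chosen set equals the entire input and is trivially at least as good as any feasible $\tilde X\subseteq\{x_1,\dots,x_k\}$. Otherwise the $(\ell+1)$th smallest contract failed the budget test, so $s(\Ch_h(\bm x))+s(x_{(\ell+1)})>1$; combining with $s(x_{(\ell+1)})\le\smax$ gives $u(\Ch_h(\bm x))=s(\Ch_h(\bm x))>1-\smax$. Any feasible competitor satisfies $u(\tilde X)=s(\tilde X)\le 1$, and multiplying by $\tfrac{1}{1-\smax}$ yields $\tfrac{1}{1-\smax}\cdot u(\Ch_h(\bm x))>1\ge u(\tilde X)$, as required.

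I do not anticipate a serious obstacle; the only step needing any care is the prefix-sum comparison in the size-monotonicity argument, and that is a one-line consequence of the order-statistic inequality between $T$ and $T\cup\{y\}$. Everything else is a direct reading of the characterization that the algorithm simply ``takes the shortest contracts first,'' combined with the $u=s$ identity from proportionality.
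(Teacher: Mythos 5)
Your proposal is correct and follows essentially the same route as the paper: characterize the output as the maximal size-sorted prefix fitting the budget, deduce order-invariance immediately, prove size-monotonicity by a one-contract extension argument (your prefix-sum domination is just a slicker phrasing of the paper's two-case containment), and obtain the $\frac{1}{1-\smax}$ bound from $s(\Ch_h(\bm{x}))>1-s(x_{(\ell+1)})\ge 1-\smax$ against the trivial upper bound $u(\tilde{X})\le 1$.
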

% Recall that $\smax$ indicates $\max_{x\in X}s(x)$. 

\begin{proof}
  First, each choice function is clearly order-invariant from the definition.
  % since it does not depend on the order of the input sequence.
  We thus assume without loss of generality that $s(x_1)<s(x_2)<\dots< s(x_{k})$ throughout this proof.

  Second, we observe that each choice function $\Ch_h$ is size-monotone. 
  To see this, it is sufficient to prove
  \begin{align}\label{eq:lad}
  |\Ch_h(x_1,\dots,x_k)|\le |\Ch_h(x_1,\dots,x_{k+1})|
  \end{align}
  for any sequence of contracts $(x_1,\dots,x_k,x_{k+1})$. 
  %%% AI: あれ？これさっきのlと同じだよね．．．
  %%% YK: 同じです，lemmaの前で一緒に示したほうがいいかもしれないです
  Recall that $\ell$ is the largest index such that the budget is not exceeded. 
  % Let $\ell~(\le k)$ be the largest index such that $\sum_{i=1}^{\ell}s(x_i)\le 1$. 
  Then, we have $\Ch_h(x_{1},\dots, x_{k})=\{x_{1},\dots,x_{\ell}\}$.
  %%% AI: ２つのケースしか吟味していないような．．．
  %%% YK: ? 2つですがどこかまずいでしょうか
  %%% AI: 直後にthree casesとなっていると空目したようです．．．
  We consider two cases for the $(k+1)$st contract $x_{k+1}$. 
  If $s(x_{k+1})>s(x_{\ell})$, then \eqref{eq:lad} holds because $\{x_{1},\dots,x_{\ell}\}\subseteq \Ch_h(x_{1},\dots, x_{k+1})$. 
  Otherwise, % For the case $s(x_{k+1})<s(x_{\ell})$,
  \eqref{eq:lad} also holds because $\{x_{1},\dots,x_{\ell-1},x_{k+1}\}\subseteq \Ch_h(x_{1},\dots, x_{k+1})$. 
  Accordingly, $\Ch_h$ is size-monotone. 

  Finally, we show that $\Ch_h$ is $\frac{1}{1-\smax}$-approximate.
  Let us assume that the input sequence is $\bm{x}=(x_1,\dots,x_k)$. 
  % As $\Ch_h$ is order-invariant, we may assume without loss of generality that $s(x_1)<s(x_2)<\dots< s(x_{k})$. 
  If $\sum_{i=1}^{k}s(x_i)\le 1$, we have $\Ch_h(\bm{x})=\{x_1,\dots,x_k\}$ and \\
    \begin{align*}
    \scalebox{0.85}{$\displaystyle
        u(\Ch_h(\bm{x}))
        =\sum_{i=1}^ks(x_i)
        =\max\left\{u(\tilde{X})\mid \!\!\!\begin{array}{l}s(\tilde{X})\le 1,\\\tilde{X}\subseteq \{x_1,\dots,x_k\}\end{array}\!\!\!\right\}.
        $}
    \end{align*}
  Otherwise, for the index $\ell$, we have
  \vspace{-1mm}
  {\small\[
    s(\Ch_h(\bm{x}))=\sum_{i=1}^{\ell}s(x_i)>1-s(x_{\ell+1})\ge 1-\smax
  \]}
  and hence
  \begin{align*}
    \scalebox{0.83}{$\displaystyle
    u(\Ch_h(\bm{x}))
    =(1-\smax)
     \ge (1-\smax)\cdot \max\left\{u(\tilde{X})\mid \!\!\!\begin{array}{l}s(\tilde{X})\le 1,\\\tilde{X}\subseteq \{x_1,\dots,x_k\}\end{array}\!\!\!\right\}.
    $}
  \end{align*}
  Thus, $\Ch_h$ is $\frac{1}{1-\smax}$-approximate. 
  % By the above three claims, the theorem holds. 
\end{proof}

% Now, we summarize the above arguments in the following theorem:  
Now, we obtain the following theorem:  
\begin{theorem}\label{thm:prop_knapsack_sp}
  For any proportional market,
  the generalized DA mechanism with the choice functions defined in Algorithm~\ref{alg:prop_knapsack_sp} 
  is strategy-proof for doctors and 
  produces a $\frac{1}{1-\smax}$-stable matching. 
  In addition, the mechanism can be implemented to run in $O(|X|\log|X|)$ time.
\end{theorem}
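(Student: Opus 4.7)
The plan is to derive the theorem directly by combining the previous lemma with the general framework established in the paper, and then to handle the running time with a careful amortized analysis.

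First, I would invoke the lemma immediately preceding the theorem: for every hospital $h$, the sequential choice function produced by Algorithm~\ref{alg:prop_knapsack_sp} is order-invariant, size-monotone, and $\frac{1}{1-\smax}$-approximate. The $\frac{1}{1-\smax}$-stability of the returned matching then follows by a one-line appeal to Theorem~\ref{thm:GDA}, since that theorem guarantees $\alpha$-stability whenever each hospital uses an $\alpha$-approximate sequential choice function.

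Second, for strategy-proofness, I would invoke the discussion between the definitions of order-invariance and size-monotonicity in the paper. Order-invariance of the sequential choice function implies that the induced (standard) choice function satisfies substitutability, and size-monotonicity implies the law of aggregate demand; by the footnote citing \citeauthor{aizerman1981gto}, these two properties together yield irrelevance of rejected contracts. With all three properties of \citeauthor{Hatfield:AER:2005}~\shortcite{Hatfield:AER:2005} in place, the generalized DA mechanism is strategy-proof for doctors.

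Third, for the running time, I would argue as follows. Sorting the contracts by size globally once takes $O(|X|\log|X|)$. Each proposal from a doctor appends one contract to some $\bm{a}^h$ and requires updating $\Ch_h(\bm{a}^h)$; because sizes can be pre-sorted and a sorted tentative-accept set can be maintained in a balanced BST (or heap) keyed by size, each insertion/expulsion step costs $O(\log|X|)$. Since each contract is proposed at most once across the entire run of the generalized DA, the total work amortizes to $O(|X|\log|X|)$. The main subtlety I would have to check carefully is the running time, ensuring that the per-proposal update really can be done in $O(\log|X|)$ amortized and that the global sort is re-usable across hospitals; the stability and strategy-proofness parts are essentially immediate corollaries of the lemma and of the general framework.
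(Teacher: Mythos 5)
Your proposal matches the paper's (implicit) argument exactly: the paper states the theorem immediately after the lemma establishing order-invariance, size-monotonicity, and $\frac{1}{1-\smax}$-approximation, so the intended proof is precisely the combination of that lemma with Theorem~\ref{thm:GDA} for stability and with the Hatfield--Milgrom framework (order-invariance and size-monotonicity giving substitutability and the law of aggregate demand, hence strategy-proofness). Your running-time discussion via a global pre-sort and incremental maintenance of the tentative set is also the natural implementation the paper has in mind, so the proposal is correct and essentially identical in approach.
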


\section{Conclusion}

This paper examined matching with budget constraints and introduced a concept of approximately stable matchings. 
First, we extended the existing % matching with contract
framework 
and proposed two novel mechanisms that return a stable matching in polynomial time: 
one is strategy-proof for doctors and the other is not. 
Second, we derived the bounds of the utilities' increment by coalitional deviation. 
The best possible bound is obtained by sacrificing strategy-proofness for additive utilities.
Finally, we examined a case in which hospitals have proportional utilities and 
found another two mechanisms that achieve better approximation ratios. 

% \newpage
% \bibliographystyle{aaai}
% \bibliography{approxmatching}

\clearpage
\appendix
\section*{Omitted proofs}
In this section, we provide omitted proofs.

\newtheorem*{thm:np-hard}{Theorem~\ref{thm:np-hard}}
\begin{thm:np-hard}
  For any constant $\alpha>1$, distinguishing whether a given market
  has a $1$-stable matching or
  has no $\alpha$-stable matching is NP-complete. 
\end{thm:np-hard}
\begin{proof}[Proof of Theorem~\ref{thm:np-hard}]
We first observe that the problem is in the class NP.
A witness would be a $1$-stable matching $X'$.
Here $X'$ is a $1$-stable matching if\\
\scalebox{0.9}{\parbox{\linewidth}{
\begin{align*}
  u(X'_h)\ge \max\!\left\{u(\tilde{X})\!\mid\!\!
  {\small
  \begin{array}{l}
    s(\tilde{X})\le 1,\\
    |X_d'|\le 1~(\forall d\in D),\\
    \tilde{X}\subseteq\left\{x\in X_h\!\mid\! x\succeq_{x_D}x'\in X_{x_D}'\right\}
  \end{array}}\!\!\!\!
  \right\}.
\end{align*}}}
Since we only want to distinguish the existence of $1$-stable matching and nonexistence of $\alpha$-stable matching,
it is sufficient to check\\
\scalebox{0.9}{\parbox{\linewidth}{
\begin{align*}
  u(X'_h)\ge \frac{1}{\alpha} \max\!\left\{u(\tilde{X})\!\mid\!\!
  {\small
  \begin{array}{l}
    s(\tilde{X})\le 1,\\
    |X_d'|\le 1~(\forall d\in D),\\
    \tilde{X}\subseteq\left\{x\in X_h\!\mid\! x\succeq_{x_D}x'\in X_{x_D}'\right\}
  \end{array}}\!\!\!\!
  \right\}.
\end{align*}}}
We can check the condition for each $h$ in polynomial time 
because we can find  $\alpha$-approximate solution of \(\max\{u(\tilde{X})\mid s(\tilde{X})\le 1,~|\tilde{X}_d|\le 1~(\forall d\in D),~\tilde{X}\subseteq\{x\in X_h\mid x\succeq_{x_D}x'\in X_{x_D}'\}\}\) in polynomial time for any fixed $\alpha>1$~\cite{kellerer2004kp}.
Thus, the problem is in the class NP.

We then prove the NP-hardness by reducing the subset sum problem---given $n+1$ positive integers $a_1,\dots,a_n,t$,
ask whether there exists a subset \(I\subseteq \{1,\dots,n\}\) such that $\sum_{i\in I}a_i=t$. 

Let $a_1,\dots,a_n$ and $t$ be an instance of subset sum problem and let $a=\sum_{j=1}^n a_j$.
Also, let $m=\lceil 3\alpha^2\rceil$.
Without loss of generality, we assume that $a>t>0$.
Consider the following market:
{\small
\begin{itemize}
\item $D=\{d_i^j\mid i=1,\dots,n,~j=0,\dots,m\}$,
\item $H=\{h^+,h^-\}\cup\{h_i^j\mid i=1,\dots,n,~j=0,\dots,m-1\}$,
\item $X=\bigcup_{i=1}^nX_i$ where $X_i=\{r^{i,1},r^{i,2},r^{i,3}\}\cup\{x^{i,1},\dots,x^{i,m}\}\cup\{y^{i,2},\dots,y^{i,m}\}\cup\{z^{i,1},\dots,z^{i,m-1}\}$,
\item $\succ_{d_i^0}:~r^{i,1}\succ_{d_i^0}r^{i,2}\succ_{d_i^0}r^{i,3}$ for $i=1,\dots,n$,
\item $\succ_{d_i^1}:~x^{i,1}\succ_{d_i^1}z^{i,1}$ for $i=1,\dots,n$,
\item $\succ_{d_i^j}:~y^{i,j}\succ_{d_i^j}x^{i,j}\succ_{d_i^j}z^{i,j}$ for $i=1,\dots,n$ and $j=2,\dots,m-1$,
\item $\succ_{d_i^m}:~y^{i,m}\succ_{d_i^m}x^{i,m}$ for $i=1,\dots,n$,
\end{itemize}}
where\\
\scalebox{0.85}{\parbox{\linewidth}{
\begin{align*}
\arraycolsep=2pt
\begin{array}{lllll}
r^{i,1}_D=d_i^0,& r^{i,1}_H=h^+,      & u(r^{i,1})=\frac{a_i}{t},    & s(r^{i,1})=\frac{a_i}{t},  &(\substack{i=1,\dots,n}),\\
r^{i,2}_D=d_i^0,& r^{i,2}_H=h^-,      & u(r^{i,2})=\frac{a_i}{a-t},  & s(r^{i,2})=\frac{a_i}{a-t},&(\substack{i=1,\dots,n}),\\
r^{i,3}_D=d_i^0,& r^{i,3}_H=h_i^0     & u(r^{i,3})=1,                & s(r^{i,3})=1,              &(\substack{i=1,\dots,n}),\\
x^{i,j}_D=d_i^j,& x^{i,j}_H=h_i^0,    & u(x^{i,j})=\frac{2\alpha}{m},& s(x^{i,j})=\frac{1}{m},    &(\substack{i=1,\dots,n,\\j=1,\dots,m}),\\
y^{i,j}_D=d_i^j,& y^{i,j}_H=h_i^{j-1},& u(y^{i,j})=1,                & s(y^{i,j})=1               &(\substack{i=1,\dots,n,\\j=2,\dots,m}),\\
z^{i,j}_D=d_i^j,& z^{i,j}_H=h_i^{j},  & u(z^{i,j})=2\alpha,          & s(z^{i,j})=1               &(\substack{i=1,\dots,n,\\j=1,\dots,m-1}).
\end{array}
\end{align*}}}
We claim that
there exists a $1$-stable matching if the subset sum instance is yes-instance and
there exists no $\alpha$-stable matching otherwise.

First, suppose that the subset sum instance is yes-instance.
Let \(I^*\subseteq \{1,\dots,n\}\) such that $\sum_{i\in I}a_i=t$. 
Then, we observe that 
\begin{align*}
  X'\coloneqq\bigcup_{i\in I^*}\{r^{i,1}\}\cup\bigcup_{i\not\in I^*}\{r^{i,2}\}\cup\bigcup_{i=1}^n\{x^{i,1},y^{i,2},\dots,y^{i,m}\}
\end{align*}
is a $1$-stable matching.
Since every doctor $d_i^j$ ($i=1,\dots,n,~j=1,\dots,m$) matches with the best contract,
it is sufficient to consider coalitions that contain some $r^i$.
Let $X''=\{r^i\mid i\in I\}$ for a set $I\subseteq\{1,\dots,n\}$.
Then, $u(X'')=\sum_{i\in I}a_i\le t-1$ and hence $X''$ is not a $1$-blocking coalition.
Thus, $X'$ is a $1$-stable matching.

Next, we claim that 
$X'$ is not $\alpha$-stable if a matching $X'$ satisfies $X'\cap\{r^{i^*,1},r^{i^*,2}\}=\emptyset$ for some $i^*$.
Note that, if the subset sum instance is no-instance, there exists such an index $i^*$ for any matching $X'$.
Let us assume that $X'$ is an $\alpha$-stable matching.
We consider the following two cases.

%Case 1
\noindent\textbf{Case 1: $r^{i^*,3}\in X'$.}
In this case, $d_{i^*}^1$ is assigned to $h_{i^*}^1$, i.e., $z^{i^*,1}\in X'$ and $y^{i^*,2}\not\in X'$,
because $u(z^{i^*,1})=2\alpha\cdot u(y^{i^*,2})>\alpha\cdot u(y^{i^*,2})$.
By applying induction, we obtain that $z^{i^*,1},\dots,z^{i^*,m-1}\in X'$ and $y^{i^*,2},\dots,y^{i^*,m}\not\in X'$.
Thus, $\{x^{i^*,1},\dots,x^{i^*,m}\}$ is an $\alpha$-blocking coalition for $h_i^0$, a contradiction.

%Case 2
\noindent\textbf{Case 2: $r^{i^*,3}\not\in X'$.}
In this case, each hospital $h_{i^*}^{j-1}$ $(j=2,\dots,m)$
is assigned a contract since otherwise $\{y_{i^*}^j\}$ is an $\alpha$-blocking coalition for $h_{i^*}^{j-1}$.
Thus, we have $|X'\cap\{x^{i^*,1},\dots,x^{i^*,m}\}|\le 1$.
This implies that $u(X_{h_{i^*}^0}')\le 2\alpha/m<1/\alpha$ and hence $\{r^{i^*,3}\}$ is an $\alpha$-blocking coalition for $h_{i^*}^0$.
This contradicts our assumption.

Thus, there exists no $\alpha$-stable matching if the subset sum instance is no-instance.
\end{proof}

\newtheorem*{thm:knapsack_lower}{Theorem~\ref{thm:knapsack_lower}}
\begin{thm:knapsack_lower}
For any positive reals $1/2<\smax<1$ and $\epsilon$, 
there exists a market such that
$s(x)\le\smax$ for all $x$ and
no $\left(\frac{1}{1-\smax}-\epsilon\right)$-stable matching exists.
\end{thm:knapsack_lower}
\begin{proof}[Proof of Theorem~\ref{thm:knapsack_lower}]
Let $\delta$ be a sufficiently small positive real and $\alpha=1/(1-\smax)$.
Let $n=\lfloor 1/\delta^3\rfloor$, $m=\lceil 1/\smax\rceil$, and $l=\lfloor(1-\smax)/\delta\rfloor+1$.

Consider a market with $n+1$ doctors $D=\{d^*,d_1,\dots,d_{n}\}$ and $n-l+1$ hospitals $H=\{h^*,h_{l+1},\dots,h_n\}$.
The set of contracts is $X=\{x^*,x^1,\dots,x^{n}\}\cup\{y^{i,j}\mid i=1,\dots,n,~j=\max\{i,l+1\},\dots,n\}$
where\\
\scalebox{0.85}{\parbox{\linewidth}{
\begin{align*}
\arraycolsep=2pt
\begin{array}{lllll}
x^*_D=d^*,    & x^*_H=h^*,    & u(x)=\smax,               & s(x)=\smax,&\\
x^i_D=d_i,    & x^i_H=h^*,    & u(x^i)=i\cdot\delta^3,    & s(x^i)=\delta &(i=1,\dots,n),\\
y^{i,j}_D=d_i,& y^{i,j}_H=h_j,& u(y^{i,j})=\alpha^{-i},   & s(y^{i,j})=\smax&(\substack{i=1,\dots,n,\\j=\max\{i,l+1\},\dots,n}),\\
\end{array}
\end{align*}}}
The doctors' preferences are given as follows:\\
\scalebox{0.9}{\parbox{\linewidth}{
\begin{align*}
\succ_{d_0}:&~x^*,\\
\succ_{d_i}:&~x^i\succ_{d_i}y^{i,l+1}\succ_{d_i}\dots\succ_{d_i}y^{i,n} \quad(i=1,\dots,l)\\
\succ_{d_i}:&~y^{i,i}\succ_{d_i}x^i\succ_{d_i}y^{i,i+1}\succ_{d_i}\dots\succ_{d_i}y^{i,n} \quad(i=l+1,\dots,n).
\end{align*}}}

We claim that there exists no $\left(\alpha-\epsilon\right)$-stable matching in this market by contradiction.
Let us assume that $X'$ is an $\left(\alpha-\epsilon\right)$-stable matching.
Here, each hospital $h_i$ $(i=l+1,\dots,n)$ is assigned a contract since otherwise $\{y^{i,i}\}$ is an $(\alpha-\epsilon)$-blocking coalition for $h_i$.

%Case 1
First, let us consider a case where $x^*\in X'$.
In this case, at least one of doctors in $\{d_1,\dots,d_l\}$ is not assigned to $h^*$, i.e., $\{x^1,\dots,x^l\}\not\subseteq X'$, because $s(x^*)+\sum_{i=1}^l s(x^i)>1$.
Let $i_1$ be the smallest index $i$ such that $x^i\not\in X'$.
Then, we have $y^{i_1,l+1}\in X'$ and $y^{l+1,l+1}\not\in X'$ since otherwise $\{y^{i_1,l+1}\}$ is an $\alpha$-blocking coalition for $h_{l+1}$.
Similarly, at least one of doctors in $\{d_1,\dots,d_{l+1}\}\setminus\{d_{i_1}\}$ is not assigned to $h^*$.
Let $i_2$ be the second smallest index $i$ such that $x^i\not\in X'$.
Then, we have $y^{i_2,l+2}\in X'$ and $y^{l+2,l+2}\not\in X'$ since otherwise $\{y^{i_2,l+1}\}$ is an $\alpha$-blocking coalition for $h_{l+2}$.
By applying induction, we obtain that $y^{i,i}\not\in X'$ for $i=l+1,\dots,n$.
Now we show that $X^1\coloneqq\{x^{n-\lfloor 1/\delta\rfloor+1},\dots,x^n\}$ is an $\left(\alpha-\epsilon\right)$-blocking coalition for $h^*$.
By the above discussion, $x\succ_{x_D} x'$ holds for any $x\in X^1\setminus X'$ and $x'\in X_{x_D}'$.
Hence, it is sufficient to prove that $u(X^1)>\left(\alpha-\epsilon\right)\cdot u(X_{h^*}')$.
A simple computation gives that
\begin{align*}
u(X_{h^*}')
&\le \smax+\sum_{i=n-l+2}^n i\cdot \delta^3\\
&\le \smax+(l-1)n\cdot \delta^3
\le 1+(1-\smax)(1/\delta),
\end{align*}
and
\begin{align*}
u(X^1)
&=\sum_{i=n-\lfloor 1/\delta\rfloor+1}^n i\cdot \delta^3\\
&\ge (n-\lfloor 1/\delta\rfloor+1)\cdot \lfloor 1/\delta\rfloor\cdot \delta^3
\ge (1-\delta^2)\lfloor 1/\delta\rfloor
\end{align*}
since $s(X^1)=\delta\cdot\lfloor 1/\delta\rfloor\le 1$.
Thus, we obtain
\begin{align*}
\frac{u(X^1)}{u(X_{h^*}')}\ge \frac{(1-\delta^2)\lfloor 1/\delta\rfloor}{1+(1-\smax)(1/\delta)}\to \frac{1}{1-\smax}=\alpha
\end{align*}
as $\delta$ goes to $0$.
Hence, $X^1$ is an $\left(\alpha-\epsilon\right)$-blocking coalition for $h^*$
since $\delta$ is sufficiently small.

%Case 2
Next, let us consider the other case, i.e. $x^*\not\in X'$.
Since each hospital $h_i$ $(i=l+1,\dots,n)$ is assigned a contract, we have $|X'\cap\{x^1,\dots,x^n\}|\le l$.
Also, $|X'\cap\{x^1,\dots,x^n\}|\ge 1$ since otherwise $\{x^*\}$ is a $(\alpha-\epsilon)$-blocking coalition.
Let $i^*$ be the largest index $i$ such that $x^i\in X'$.
Then we have $u(X_{h^*}')\le l\cdot i^*\cdot \delta^3\le i^*\cdot((1-\smax)\delta^2+\delta^3)$.

% Case 2.1
For the case $i^*\le \smax/\delta^2$, let $X^2=\{x^*\}$.
Then we get
\begin{align*}
\frac{u(X^2)}{u(X_{h^*}')}
\ge \frac{\smax}{i^*\cdot((1-\smax)\delta^2+\delta^3)}
\ge \frac{1}{(1-\smax)+\delta}
\to\alpha
\end{align*}
as $\delta$ goes to $0$.
Hence, $X^2$ is an $\left(\alpha-\epsilon\right)$-blocking coalition for $h^*$ when $\delta$ is sufficiently small.

% Case 2.2
For the case $i^*> \smax/\delta^2$, let $X^3=\{x^{i^*-\lfloor 1/\delta\rfloor+1},\dots,x^{i^*}\}$.
Here, $i^*>l$ holds since $\delta$ is sufficiently small.
We show that $X^3$ is an $\left(\alpha-\epsilon\right)$-blocking coalition for $h^*$ in this case.
Since $|X'\cap\{x^1,\dots,x^n\}|\le l$ and $x^{i^*}\in X'$, at least one of doctors in $\{d_1,\dots,d_l\}$ is not assigned to $h^*$, i.e., $\{x^1,\dots,x^l\}\not\subseteq X'$.
Analysis similar to that in the first case shows that $y^{i,i}\not\in X'$ for $i=l+1,\dots,i^*$.
Then, we have $x\succ_{x_D} x'$ holds for any $x\in X^3\setminus X'$ and $x'\in X_{x_D}'$.
Also we have
\begin{align*}
\frac{u(X^3)}{u(X_{h^*}')}
&\ge \frac{\sum_{i=i^*-\lfloor 1/\delta\rfloor+1}^{i^*}i\cdot \delta^3}{i^*\cdot((1-\smax)\delta^2+\delta^3)}
\ge \frac{(i^*-\lfloor 1/\delta\rfloor)\cdot\delta^2}{i^*\cdot((1-\smax)\delta^2+\delta^3)}\\
&\ge \frac{\delta^2}{(1-\smax)\delta^2+\delta^3}-\frac{(\lfloor 1/\delta\rfloor)\cdot\delta^2}{(\smax/\delta^2)\cdot((1-\smax)\delta^2+\delta^3)}\\
&\to\alpha
\end{align*}
as $\delta$ goes to $0$.
Hence, $X^3$ is an $\left(\alpha-\epsilon\right)$-blocking coalition for $h^*$ when $\delta$ is sufficiently small.
\end{proof}

\begin{proposition}\label{prop:no-strategy-proof}
For $1\ge \smax>1/2$,
there exists a market such that 
the generalized DA with the choice functions defined in Algorithm~\ref{alg:knapsack} 
is not strategy-proof for doctors.
\end{proposition}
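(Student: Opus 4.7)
The plan is to exhibit a small market with $\smax > 1/2$ together with a distinguished doctor whose truthful report is strictly dominated by a misreport. The key ingredient is that Algorithm~\ref{alg:knapsack} fails to be size-monotone: take a hospital $h$ with two contracts $a, b$ of size $1/2$ and small utility, together with one contract $c$ of size $\smax$ and large utility. Then $\Ch_h(a, b) = \{a, b\}$ (both fit), while $\Ch_h(a, b, c) = \{c\}$ (the high-$u/s$ contract $c$ cascades both $a$ and $b$ out). Size-monotonicity together with substitutability is precisely what drives the Hatfield-Milgrom strategy-proofness theorem cited earlier in the paper, so this drop in cardinality is exactly the opening needed for a profitable manipulation.

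To realize the manipulation, I would embed this ``cascade gadget'' in both hospitals of a two-hospital market and let a distinguished doctor $d$ hold one contract at each hospital. The preferences of the other doctors are chosen so that, under truthful reporting, a cascade at $h_1$ forces $d$ to propose her fallback contract at $h_2$, where a second cascade subsequently rejects her, leaving $d$ either unmatched or assigned to her least-preferred acceptable contract. Under a suitable misreport (for instance, declaring one of her own contracts unacceptable, or swapping two adjacent entries in her reported preference list) the change in which contracts $d$ actually proposes alters the set of contracts that eventually arrive at one of the two hospitals; this breaks one of the cascades and lets $d$ end up at a contract she genuinely prefers.

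The verification then reduces to three routine steps: (i) trace the GDA under truthful reporting and identify the doctor $d$ whose match is suboptimal; (ii) trace the GDA under $d$'s misreport and record the new matching by following the updated sequences $\bm{a}^{h_1}$, $\bm{a}^{h_2}$ through Algorithm~\ref{alg:knapsack}; (iii) check that the new matching is strictly preferred to the truthful one under $d$'s true preferences. The main obstacle is the delicate bookkeeping in arranging the utilities and sizes so that the two hospital-level cascades interact correctly. Since Algorithm~\ref{alg:knapsack} is order-invariant on any fixed set of proposals, the effect of the misreport must genuinely change the \emph{set} of contracts that eventually arrive at some hospital rather than merely rearranging their arrival order; this is why the construction requires cascades at both hospitals that are coupled through doctors' fallback chains. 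The hypothesis $\smax > 1/2$ is exactly what makes the cascade gadget (two contracts of size $1/2$ plus one of size $\smax$ in a budget-$1$ hospital) available, which is why this side condition appears in the statement.
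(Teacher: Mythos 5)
Your structural diagnosis is correct and is in fact the same one the paper's construction relies on: the counterexample in the paper places at one hospital exactly your cascade gadget, namely two contracts of size $1/2$ (with utilities $1$ and $2$) plus one contract of size $\smax$ with utility $4$, so that $\Ch_h$ accepts the two small contracts together but ejects both of them once the high-ratio contract arrives; this is precisely the failure of size-monotonicity that blocks the Hatfield--Milgrom strategy-proofness argument, and it is why the hypothesis $\smax>1/2$ appears. However, what you have written is a plan for a proof rather than a proof. The proposition is an existence claim, so its proof consists of a concrete market, a concrete misreport, and a verified trace of the generalized DA under both reports. You supply none of these: the utilities, sizes, and preference lists of the remaining doctors are never fixed, the identity of the manipulating doctor and her misreport are left as two generic possibilities, and the ``delicate bookkeeping'' of checking that the cascades interact so as to make the manipulation profitable is explicitly deferred---but that bookkeeping \emph{is} the content of the proof. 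The paper closes this gap with a three-doctor, two-hospital market (contracts $x^{i,j}$ with $s(x^{1,1})=s(x^{3,1})=1/2$ and all other sizes $\smax$, utilities $1,4,4,1,2,2$) in which, under truthful reporting, one doctor is rejected after the cascade, whereas swapping the order of her two contracts in her report gets her matched to an acceptable contract.

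A secondary point: your parenthetical claim that Algorithm~\ref{alg:knapsack} is order-invariant on any fixed set of proposals is false. Take three contracts of sizes $0.6$, $0.6$, $0.4$ with utility-to-size ratios $10$, $9$, $8$; presented in that order the algorithm outputs the first and third contracts, while presented in the order second, third, first it outputs only the first. Consequently the design constraint you derive from order-invariance (that the misreport must change the \emph{set} of contracts reaching some hospital, forcing coupled cascades at two hospitals) is not actually forced, and indeed the paper's example is simpler than the one you describe.
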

\begin{proof}
Let us consider the following market:
\begin{itemize}
\item $D=\{d_1,d_2,d_3\}$, $H=\{h_1,h_2\}$, $X=\{x^{1,1},x^{1,2},x^{2,1},x^{2,2},x^{3,1},x^{3,2}\}$,
\item $x^{i,j}_D=d_i$ and $x^{i,j}_H=h_j$ $(i=1,2,3,~j=1,2)$,
\item $s(x^{1,1})=s(x^{3,1})=1/2$, $s(x^{1,2})=s(x^{2,1})=s(x^{2,2})=s(x^{3,2})=\smax$,
\item $u(x^{1,1})=1,~u(x^{1,2})=4,~u(x^{2,1})=4,~u(x^{2,2})=1,~u(x^{3,1})=2,~u(x^{3,2})=2$,
\item $x^{1,1}\succ_{d_1}x^{1,2}$, $x^{2,2}\succ_{d_2}x^{2,1}$, $x^{3,2}\succ_{d_3}x^{3,1}$.
\end{itemize}

For this market, the generalized DA outputs a matching $\{x^{1,2},x^{3,1}\}$.
However, if $d_3$ misreports her preference as $x^{2,1}\succ_{d_3}'x^{2,2}$,
then the generalized DA gives a matching $\{x^{1,2},x^{2,1},x^{3,1}\}$.
Thus, the mechanism is not strategy-proof for doctors since $d_2$ has an incentive to misreport.
\end{proof}

\begin{theorem}\label{thm:porp_hard}
  For any sciently small $\epsilon>0$,
  distinguishing whether a given proportional market
  has $\left(\frac{1+\sqrt{5}}{2}-\epsilon\right)$-stable matching or not is NP-hard.
\end{theorem}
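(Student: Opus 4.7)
The plan is to reduce from the NP-hard subset sum problem by combining the partition gadget used in the proof of Theorem~\ref{thm:np-hard} with copies of the ``no-$(\phi-\epsilon)$-stable'' gadget from the proof of Theorem~\ref{thm:prop_nonexist}, arranged so that every utility equals its size (i.e.\ $\beta_h=1$). Given an instance $(a_1,\ldots,a_n,t)$ with $a=\sum_i a_i$, I would construct a proportional market with two partition hospitals $h^+,h^-$ and, for each $i\in\{1,\ldots,n\}$, an independent copy $G_i$ of that gadget: three doctors $d_1^i,d_2^i,d_3^i$, two hospitals $h_1^i,h_2^i$, and five internal contracts $x^i,y^i,\hat{y}^i,z^i,\hat{z}^i$ with exactly the sizes and preferences used in Theorem~\ref{thm:prop_nonexist}. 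To link $G_i$ to the partition hospitals, I would add two escape contracts for the critical doctor $d_1^i$: $r^{i,+}$ at $h^+$ with $s(r^{i,+})=a_i/t$, and $r^{i,-}$ at $h^-$ with $s(r^{i,-})=a_i/(a-t)$, and set $d_1^i$'s preference to $r^{i,+}\succ_{d_1^i} r^{i,-}\succ_{d_1^i} x^i$.

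For the yes direction, given $I\subseteq\{1,\ldots,n\}$ with $\sum_{i\in I}a_i=t$, I would propose the matching that sends $d_1^i\to r^{i,+}$ for $i\in I$, $d_1^i\to r^{i,-}$ for $i\notin I$, and completes each $G_i$ by the internal pair $\{y^i,\hat{z}^i\}$. Both partition hospitals are exactly filled (utility $1$, so no coalition can improve them), and once $d_1^i$ (hence $x^i$) is removed from $G_i$, a direct check shows that $\{y^i,\hat{z}^i\}$ is an internally $1$-stable matching, so the overall matching is $1$-stable and in particular $(\phi-\epsilon)$-stable. For the no direction, suppose $X'$ is $(\phi-\epsilon)$-stable. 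If some $d_1^i$ took $x^i$, then the restriction of $X'$ to $G_i$ is a matching of the Theorem~\ref{thm:prop_nonexist} gadget and by that theorem admits a $(\phi-\epsilon)$-blocking coalition, contradiction. Otherwise, let $S^\pm=\{i\mid d_1^i\to r^{i,\pm}\}$; the budget inequalities $\sum_{S^+}a_i\le t$ and $\sum_{S^-}a_i\le a-t$ together with disjointness force $\sum_{S^+}a_i=t$ whenever $S^+\cup S^-=\{1,\ldots,n\}$, contradicting the no-assumption. Hence some $d_1^j$ is unmatched, and a short case analysis on $X'$ restricted to $\{d_2^j,d_3^j,h_1^j,h_2^j\}$ yields a $(\phi-\epsilon)$-blocking coalition: either an inherited Theorem~\ref{thm:prop_nonexist}-style one (those coalitions do not involve $d_1^j$), or, in the unique internally stable configuration $(\{y^j\},\{\hat{z}^j\})$, the coalition $\{x^j\}$ at $h_1^j$ with ratio $(1/\phi+\epsilon/2)/(1/\phi^2)=\phi+\epsilon\phi^2/2>\phi-\epsilon$.

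The main obstacle is the ``$d_1^j$ unmatched'' subcase: I have to enumerate the (eight valid) possible restrictions of $X'$ to $G_j$ and exhibit an explicit $(\phi-\epsilon)$-blocking coalition in each. The tightest numerical check is the configuration $(\{y^j\},\{\hat{z}^j\})$, where the blocking ratio $\phi+\epsilon\phi^2/2$ barely exceeds $\phi-\epsilon$; every other configuration admits a purely internal blocking coalition (for instance $\{y^j,z^j\}$ at $h_1^j$ with ratio $\phi$ when $d_3^j$ holds $z^j$ and $d_2^j$ holds $\hat{y}^j$), so the argument does not depend sensitively on the tuning of $s(x^j)$. If any subcase proves stubborn, I would strengthen the construction by duplicating the critical doctor with a parallel copy sharing the same escape and gadget contracts, which forces $x^j$ into $h_1^j$ in any candidate $(\phi-\epsilon)$-stable matching and reduces the analysis to the Theorem~\ref{thm:prop_nonexist} regime already handled.
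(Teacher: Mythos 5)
Your proposal is correct and follows essentially the same route as the paper: a reduction from subset sum in which two ``partition'' hospitals (with capacities encoding $t$ and $a-t$) absorb the critical doctors exactly when a subset summing to $t$ exists, while the Theorem~\ref{thm:prop_nonexist} gadget is triggered whenever some critical doctor is left unabsorbed. The only difference is that you instantiate $n$ disjoint copies of the gadget, whereas the paper shares a single copy by routing each $d_i$'s third-choice contract (of size $1/\phi+\epsilon/2$) into one common gadget hospital; both versions defer the resulting case analysis to the proof of Theorem~\ref{thm:prop_nonexist} in the same way.
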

\begin{proof}
Recall that $\phi=\frac{1+\sqrt{5}}{2}$.

We prove the NP-hardness by reducing the subset sum problem.
Let $a_1,\dots,a_n$ and $t$ be an instance of subset sum problem and let $a=\sum_{j=1}^n a_j$.
Without loss of generality, we assume that $a>t>0$.

Consider a market with $n+3$ doctors $D=\{d_1,\dots,d_{n+2}\}$ and four hospitals $H=\{h_1,\dots,h_4\}$.
The set of constracts is $X=\bigcup_{i=1}^n\{x^{i,1},x^{i,2},x^{i,3}\}\cup\{y,\hat{y},z,\hat{z}\}$ where\\
\scalebox{0.9}{\parbox{\linewidth}{
\begin{align*}
\arraycolsep=3pt
  \begin{array}{lllll}
    x^{i,1}_D=d_i,    & x_H=h_1,      & s(x^{i,1})=a_i/t,&(i=1,\dots,n),\\
    x^{i,2}_D=d_i,    & x_H=h_2,      & s(x^{i,2})=a_i/(a-t),&(i=1,\dots,n),\\
    x^{i,3}_D=d_i,    & x_H=h_3,      & s(x^{i,3})=1/\phi+\epsilon/2,&(i=1,\dots,n),\\
    y_D=d_{n+1},      & y_H=h_3,      & s(y)=(1/\phi)^2,&\\
    \hat{y}_D=d_{n+1},& \hat{y}_H=h_4,& s(\hat{y})=1,&\\
    z_D=d_{n+2},      & z_H=h_3,      & s(z)=1/\phi,&\\
    \hat{z}_D=d_{n+2},& \hat{z}_H=h_4,& s(\hat{z})=\epsilon.&
  \end{array}
\end{align*}}}
The doctors' preferences are given as follows: 
\begin{align*}
  \succ_{d_{i}}:&~x^{i,1}\succ_{d_i}x^{i,2}\succ_{d_i}x^{i,3},&(i=1,\dots,n),\\
  \succ_{d_{n+1}}:&~y\succ_{d_{n+1}}\hat{y},\\
  \succ_{d_{n+2}}:&~\hat{z}\succ_{d_{n+2}}z.
\end{align*}
We claim that
there exists a $1$-stable matching if the subset sum instance is yes-instance and
there exists no $(\phi-\epsilon)$-stable matching otherwise.

Suppose that the subset sum instance is yes-instance.
Let \(I^*\subseteq \{1,\dots,n\}\) such that $\sum_{i\in I}a_i=t$. 
Then, we can observe that
\begin{align*}
  X''\coloneqq \{x^{i,1}\mid i\in I^*\}\cup\{x^{i,2}\mid i\not\in I^*\}\cup\{y,\hat{z}\}
\end{align*}
is a $1$-stable matching.

Conversely, suppose that the subset sum instance is no-instance.
We claim that there exists no $(\phi-\epsilon)$-stable matching by contradiciton.
Let us assume that $X''$ is an $(\phi-\epsilon)$-stable matching.
Since the subset sum instance is no-instance,
there exists an index $i^*\in\{1,\dots,n\}$ such that $X''\cap\{x^{i^*,1},x^{i^*,2}\}=\emptyset$.
Then, by a similar discussion to the proof of Theorem~\ref{thm:prop_nonexist}, $X''$ is not $(\phi-\epsilon)$-stable,
which is a contradiction.
\end{proof}

\end{document}